\newenvironment{customthm}[1]
  {\innercustomthm}
  {\endinnercustomthm}
\newenvironment{customcor}[1]
  {\innercustomcor}
  {\endinnercustomthm}
\newenvironment{customlemma}[1]
  {\innercustomlemma}
  {\endinnercustomlemma}
\newtheorem{claim}{Claim}
\theoremstyle{remark} }
\theoremstyle{definition}
\def\RR{\mathbb R}
\def\cG{\mathcal G}
\def\cH{\mathcal H}
\def\cN{\mathcal N}
\newcommand{\raf}[1]{(\ref{#1})}
\newcommand{\cP}{\ensuremath{\mathcal{P}}}
\title{Quantifying Inefficiency of Fair Cost-Sharing Mechanisms for Sharing Economy}
\author{
Chi-Kin Chau and Khaled Elbassioni
\thanks{C.-K. Chau and K. Elbassioni are with Masdar Institute (e-mail: \{ckchau, kelbassioni\}@masdar.ac.ae).}
\thanks{Abridged version of this paper appears in IEEE Transactions on Control of Network Systems (DOI:10.1109/TCNS.2017.2763747).
}
}
\newif\ifsupplementary
\begin{document}

\maketitle

\begin{abstract}
Sharing economy is a distributed peer-to-peer economic paradigm, which gives rise to a variety of social interactions for economic purposes. One fundamental distributed decision-making process is coalition formation for sharing certain replaceable resources collaboratively, for example, sharing hotel rooms among travelers, sharing taxi-rides among passengers, and sharing regular passes among users. Motivated by the applications of sharing economy, this paper studies a coalition formation game subject to the capacity of $K$ participants per coalition. The participants in each coalition are supposed to split the associated cost according to a given cost-sharing mechanism. A stable coalition structure is established when no group of participants can opt out to form another coalition that leads to lower individual payments. We quantify the inefficiency of distributed decision-making processes under a cost-sharing mechanism by the strong price of anarchy (SPoA), comparing a worst-case stable coalition structure and a social optimum. In particular, we derive SPoA for common fair cost-sharing mechanisms (e.g., equal-split, proportional-split, egalitarian and Nash bargaining solutions of bargaining games, and usage based cost-sharing). We show that the SPoA for equal-split, proportional-split, and usage based cost-sharing (under certain conditions) is $\Theta(\log K)$, whereas the one for egalitarian and Nash bargaining solutions is $O(\sqrt{K} \log K)$. Therefore, distributed decision-making processes under common fair cost-sharing mechanisms induce only moderate inefficiency.


\end{abstract}
\begin{IEEEkeywords}
Sharing economy, coalition formation, social and economic networks, fair cost-sharing mechanisms
\end{IEEEkeywords}

\pagenumbering{arabic}

\section{Introduction} \label{sec:intro}

The rise of  ``sharing economy'' \cite{economist13} has created a new paradigm of social and economic networks, which promotes distributed peer-to-peer interactions and bypasses traditional centralized hierarchal service providers and intermediaries. Sharing economy is often facilitated by the advent of pervasive information technology platforms, especially by mobile computing and digital social platforms. One fundamental distributed decision-making process in sharing economy is coalition formation for sharing resources and facilities with excess capacity among users collaboratively and efficiently. We highlight some examples of sharing economy as follows:
\begin{enumerate}

\item {\em Hotel Room Sharing}: Travelers may share hotel rooms with other fellow travelers, because multiple-occupancy rooms are more economical. The sharing processes are achieved through private arrangements among travelers.

\item {\em Taxi-ride Sharing}: Commuters may share taxi-rides because of lower taxi fares, despite that the taxicabs may take a detour to pick-up or drop-off other passengers.

\item {\em Pass Sharing}: Certain anonymous regular passes are validated within a certain fixed period of time (e.g., accesses to public transportation or public facilities), which may be shareable among multiple holders, if they do not overlap in their usage times. Note that this example also applies to co-owning or co-leasing physical properties (e.g., houses, cars and parking lots).

\end{enumerate}

In this paper, we study a class of distributed decision-making processes for sharing economy. In particular, we consider the problem with a set of participants forming coalitions to share certain replaceable resources from a large pool of available resources (e.g., hotel rooms, taxicabs, regular passes), such that any subset of participants can always form a coalition using separate resources, independent from other coalitions. We formulate a coalition formation game, wherein participants form arbitrary coalitions of their own accord to share the associated cost, subject to a constraint on the maximum number of participants per coalition. 

There are two main aspects investigated in this paper:
\begin{enumerate}

\item {\bf Inefficiency of Distributed Decision-Making}:
Since there is a capacity per coalition such that not all participants can form a single coalition, there will exist potentially multiple coalitions and the self-interested participants will opt for the lower payments.
Distributed self-interested behavior often gives rise to outcomes that deviate from a social optimum. A critical question is related to the inefficiency of distributed decision-making processes. We quantify the inefficiency of distributed decision-making processes by the {\em Strong Price of Anarchy} (SPoA), a common metric in Algorithmic Game Theory that compares the worst-case ratio between the self-interested outcomes (that allow any group of users to deviate jointly) and a social optimum \cite{AFM09,AGTbook, chau03ncg, chau03selfish}. 

\item {\bf Fair Cost-Sharing Mechanisms}:
Sharing economy can be regarded as an alternative to the for-profit sector, which resembles cooperative organizations and favors distributive justice. 
When participants share the costs, there is a notion of fairness. We aim to characterize the inefficiency of distributed decision-making under common fair cost-sharing mechanisms.
First, we consider typical fair cost-sharing concepts such as equal-split, proportional-split, and usage based cost-sharing. Second, we formulate the cost-sharing problem as a bargaining game. Thus, the well-known {\em bargaining game solutions} (e.g., egalitarian and Nash bargaining solutions) \cite{bargainbook} can be applied in the context of sharing economy. 

\end{enumerate}

\begin{table*}[!htb]
\centering
  \caption{A summary of our results. \label{tab:summary}}{
  \begin{tabular}{@{}c|c|c|c|c@{}}
\hline
\hline
    & Equal-split & Proportional-split & Egalitarian/Nash & Usage Based \\
\hline
Existence of Stable Coalition & $\checkmark$ & $\checkmark$ & $\checkmark$ &  Only in some problems\\
&&&&\\
Strong Price of Anarchy & $\Theta(\log K)$ & $\Theta(\log K)$ & $O(\sqrt{K} \log K)$ & $\Omega(K)$ (in general)\\
 & & & & $\Theta(\log K)$ (under certain conditions) \\
\hline
\hline
  \end{tabular}}
\end{table*}

This paper presents a comprehensive study for the SPoA of a general model of coalition formation, considering various common fair cost-sharing mechanisms.

\subsection{Our Contributions}

We consider $K$-capacitated coalitions, where $K$ is the maximum number of sharing participants per coalition. A {\em stable coalition structure}, wherein no group of participants can opt out to form another coalition that leads to lower individual payments, is a likely self-interested outcome. The results in this paper are summarized as follows (and in Table~\ref{tab:summary}). 
\begin{enumerate}

\item The SPoA for any budget balanced cost-sharing mechanism is $O(K)$.

\item However, the SPoA for equal-split and proportional-split cost-sharing is only $\Theta(\log K)$.

\item The SPoA for egalitarian and Nash bargaining solutions is $O(\sqrt{K} \log K)$.

\item The SPoA for usage based cost-sharing is generally $\Omega(K)$. However, we provide natural sufficient conditions to improve the SPoA to be $\Theta(\log K)$, which apply to the examples of sharing economy in this paper.

\item Therefore, distributed decision-making processes under common fair cost-sharing mechanisms induce only moderate inefficiency.

\item We also study the existence of stable coalition structures. 

\end{enumerate}
\section{Related Work} \label{sec:related}

Our problem belongs to the topic of network cost-sharing and hedonic coalition formation problems \cite{albers08ndg,AB12,ADTW08ndg,ADKTWR09ndg,EFM09con,HVW15,H13,RS09,RS14costsharing,KS15sharinggame}. A study particularly related to our results is the price of anarchy for stable matching and the various extensions to $K$-sized coalitions \cite{ADN09,ABH13}. 
Our coalition formation game is a subclass of hedonic coalition formation games \cite{AB12,H13} that allows arbitrary coalitions subject to a constraint on the maximum number of participants per coalition. This model appears to be realistic in many practical settings of sharing economy\footnote{For example, consider taxi-ride sharing. Any passengers can form a coalition to share a taxi-ride, subject to the maximum capacity of a taxicab.}. 

Our work differs from typical cooperative games. In our coalition formation game, each player joins a coalition that incurs a lower individual payment, under a given cost-sharing mechanism. On the other hand, typical cooperative games generally do not consider a specific cost-sharing mechanism, but find a cost-sharing allocation according to certain axioms.


One may regard the results about stable matching in \cite{ADN09,ABH13} as a special case of $K=2$ in our model. However, unlike the stable matching problem, our model has additional structure that can be harnessed for tighter results (e.g., monotonicity). For example, according to \cite{ADN09}, the price of anarchy for Matthew's effect (equivalently, proportional-split cost-sharing) can be unbounded. Here with the help of monotonicity, we show that it is $\Theta(\log K)$. We also study other cost-sharing mechanisms, such as egalitarian, Nash bargaining solutions, and usage based cost-sharing mechanisms that are not considered in \cite{ADN09,ABH13}. Moreover, \cite{ABH13} is based on the comparison of the {\em utility} of a stable matching, while our results are based on the comparison of the {\em cost} of a stable coalition structure. Although it is possible to derive some of our results (for $K=2$) from the previous results, the bounds obtained this way are typically weaker than ours and the gap can increase as a linear function of $K$.  

Network cost-sharing games with capacitated links and non-anonymous cost functions \cite{RS14costsharing, FR12netgames, HF12costsharing} are closely related to our problem. Non-anonymous cost functions may depend on the identity of the players in the coalition, so as to capture the asymmetries between the players because of different service requirements. In \cite{RS14costsharing}, a logarithmic upper bound on the price of anarchy considering the Shapley value in network cost-sharing games with non-anonymous submodular cost functions was given. Our problem can be modeled by a network cost-sharing game with non-anonymous cost functions. In particular, our model is a special case of a $K$-capacitated network cost-sharing game with a simple structure of $n$ parallel links and non-anonymous cost functions such that a strategy profile of the users in the network game corresponds to a coalition structure\footnote{For example, consider taxi-ride sharing. An additional passenger can join a taxi-ride with certain existing passengers who have already formed a coalition, only if all of them will not be worse-off after the change. Otherwise, the existing passengers can always reject the additional passenger by keeping their current coalition. This is always possible in a network cost-sharing game with $n$ parallel links.}. 
However, the key difference of our model from those in \cite{albers08ndg,EFM09con, RS14costsharing} is that we consider replaceable resources from a large pool of  available resources, such that a subset of deviating users can always form a new coalition, independent from other users. This is not true in general network cost-sharing games, when there are limited resources (e.g., links) that a deviating coalition of players can utilize, and it may not be possible to form arbitrary coalitions independent from others. Our model allows us to derive SPoA bounds for diverse cost-sharing mechanisms, whereas only specific cost-sharing mechanisms (e.g., the Shapley value) were considered in general network cost-sharing games.  

It is also worth mentioning how our results of usage based cost-sharing relate to cost-sharing with anonymous cost functions in network design games \cite{albers08ndg} 
or connection games \cite{EFM09con}. On one hand, our model is simpler as we do not assume connectivity requirements in a network, but only an abstract setting that allows arbitrary coalitions up to a certain capacity (but we allow non-anonymous cost functions). On the other hand, one of the cost-sharing mechanisms we consider (i.e., usage based cost-sharing) resembles in some sense that used in \cite{albers08ndg,EFM09con} if we interpret the resources used by one participant as his chosen path or tree in the network design game. Similar to the case in \cite{albers08ndg} (with respect to strong Nash equilibrium), a usage based cost-sharing mechanism may not admit a stable coalition structure. Noteworthily, the SPoA in usage based cost-sharing in our model can increase as a linear function of $K$. Nonetheless, we provide general sufficient conditions for usage based cost-sharing to induce logarithmic SPoA.



\section{Model} \label{sec:model}

This section presents a general model of coalition formation for sharing certain resources, motivated by the applications of sharing economy. Consider an $n$-participant cooperative game in coalition form. The coalitions formed by the subsets of participants are associated with a real-valued cost function. The participants in a coalition are supposed to split the cost according to a certain payment function. However, as a departure from traditional cooperative games, there is a capacity per each coalition, such that not all participants can form a single coalition. Hence, when given a payment function, the participants will opt for coalitions that lead to lower individual payments subject to a capacity per coalition.

\subsection{Problem Formulation}

The set of $n$ participants is denoted by ${\cal N}$. A {\em coalition structure} is a partition of ${\cal N}$ denoted by ${\cal P} \subset 2^{\cal N}$, such that $\bigcup_{G \in \cal P} G = {\cal N}$ and $G_1 \cap G_2 = \varnothing$ for any pair $G_1, G_2 \in {\cal P}$. Let the set of all partitions of ${\cal N}$ be ${\mathscr P}$. Each element $G \in \cal P$ is called a coalition. The set of singleton coalitions, ${\cal P}_{\rm self} \triangleq \{\{i\} : i \in {\cal N} \}$, is called the {\em default coalition structure}, wherein no one forms a coalition with others.  

This paper considers arbitrary coalition structures with at most $K$ participants per coalition, which is motivated by scenarios of sharing replaceable resources; see Section \ref{sec:examples} for examples. The notion of resources will be introduced later in Sec.~\ref{sec:res}, and our model does not always rely on the notion of resources. In practice, $K$ is often much less than $n$. Let ${\mathscr P}_K \triangleq \{{\cal P} \in {\mathscr P} : |G| \le K \mbox{\ for each\ } G \in {\cal P}\}$ be the set of {\it feasible} coalition structures, such that each coalition consists of at most $K$ participants.

\medskip

\subsubsection{Cost Function}
A cost $c(G)$ (also known as a characteristic function) is assigned to each coalition of participants $G \in {\cal P} \in {\mathscr P}_K$, subject to the following properties:
\begin{enumerate}

\item[(C1)] $c(\varnothing) = 0$ and $c(G) > 0$ if $G \ne \varnothing$.

\item[(C2)] {\em Monotonicity}: $c(G) \ge c(H)$, if $H \subseteq G$.

\end{enumerate}
Monotonicity captures natural coalition formation with increasing cost as the number of participants. 
The total cost of coalition structure ${\cal P}$ is denoted by $c({\cal P}) \triangleq \sum_{G \in {\cal P}} c(G)$. For brevity, we also denote $c(\{i\})$ by $c_i$, where $c_i$ is called the {\em default cost} of participant $i$, that is, when $i$ forms no coalition with others. 
When a subset of participants are indexed by ${\cal N'} = \{i_1, i_2, ..., i_j \}\subseteq\cN$, we simply denote the corresponding default costs by $\{ c_1, c_2, ..., c_j\}$. 

A $K$-capacitated {\em social optimum} is a coalition structure ${\cal P}^\ast_K \in {\mathscr P}_K$ that minimizes the total cost:
\begin{align} \hspace{-60pt}
(\textsc{$K$-MinCoalition}) & \ \ {\cal P}^\ast_K \in \arg\min_{{\cal P} \in {\mathscr P}_K} c({\cal P}) \label{eqn:kmincost} 
\end{align}
When $K=2$, a social optimum ${\cal P}^\ast_K$ can be found in polynomial time by reducing the coalition formation problem to a (general graph) matching problem. When $K>2$, \textsc{$K$-MinCoalition} is an NP-hard problem (see Appendix).

\medskip

\subsubsection{Canonical Resources} \label{sec:res}

There is often a resource being shared by each coalition (e.g., a hotel room, a taxicab, a pass). The resources are usually replaceable from a large pool of available resources in the sharing economy. Hence, any subset of participants can always form a coalition using separate resources, independent from other coalitions. For each coalition $G$, we consider a {\em canonical resource}, which is a class of replaceable resources that can satisfy $G$, rather than any specific resource. The canonical resource shared by a coalition will not be affected by the canonical resources shared by other coalitions. Because of the consideration of canonical resources, our model exhibits different properties than the network sharing games with limited resources \cite{RS14costsharing, FR12netgames, HF12costsharing}. 

Let ${\mathscr R}(G)$ be the feasible set of canonical resources that can satisfy coalition $G$. It is naturally assumed that ${\mathscr R}(H) \supseteq {\mathscr R}(G)$, when $H \subseteq G$, because the canonical resources that can satisfy a larger coalition $G$ should also satisfy a smaller coalition $H$ (by ignoring the participants in $G \backslash H$). Each canonical resource $r \in {\mathscr R}(G)$ is characterized by a cost $c_r$, and a set of involved facilities ${\mathscr F}(r)$. Each facility $f \in {\mathscr F}(r)$ carries a cost $c^f$, such that $\sum_{f \in {\mathscr F}(r)} c^f = c_r$. We do not require that every participant of $G$ utilizes the same facilities. Let ${\mathscr F}_i(r)\subseteq {\mathscr F}(r)$ be the set of facilities utilized by participant $i \in G$, when $r$ is shared by $G$. Let $r(G) \in \arg\min_{r \in {\mathscr R}(G)}\{ c_r \}$ be the lowest cost canonical resource for coalition $G$. Hence, we set $c(G) = c_{r(G)}$ and monotonicity is satisfied. If there are multiple lowest cost canonical resources, one is selected by a certain deterministic tie-breaking rule.

\subsection{Motivating Examples}\label{sec:examples}

We present a few motivating examples in sharing economy to illustrate the aforementioned model.

\medskip

\subsubsection{Hotel Room Sharing}

Consider ${\cal N}$ as a set travelers to share hotel rooms. Each traveler $i \in {\cal N}$ is associated with a tuple $(t^{\sf in}_i, t^{\sf out}_i, {\sf A}_i)$, where $t^{\sf in}_i$ is the arrival time, $t^{\sf out}_i$ is the departure time, and ${\sf A}_i$ is the area of preferred locations of hotels. Let $K$ be the maximum number of travelers that can share a room. A canonical resource is a room booking $r$, associated with a tuple $(t^{\sf in}_r, t^{\sf out}_r, a_r)$, where $t^{\sf in}_r$ is the check-in time, $t^{\sf out}_r$ is the check-out time, and $a_r$ is the hotel location. We assume that there is a large pool of available rooms for each location, and we do not consider a specific room. The feasible set ${\mathscr R}(G)$ is a set of room bookings shared by a coalition of travelers $G$, if the following conditions are satisfied:
\begin{enumerate}

\item $|G| \le K$;

\item $a_r \in \bigcap_{i\in G} {\sf A}_i$;

\item $t^{\sf in}_r \le t^{\sf in}_i$ and $t^{\sf out}_r \ge t^{\sf out}_i$ for all $i \in G$.

\end{enumerate}
(Note that the monotonicity assumption is satisfied: ${\mathscr R}(H) \supseteq {\mathscr R}(G)$, when $H \subseteq G$.) 
In this example, ${\mathscr F}(r)$ is be the set of days during $[t^{\sf in}_r, t^{\sf out}_r]$ for room booking $r$, and ${\mathscr F}_i(r)$ be the set of days that $i$ stays in room booking $r$. For each $f \in {\mathscr F}(r)$, $c^f$ is the hotel rate of day $f$.

\medskip

\subsubsection{Taxi-ride Sharing}

Consider ${\cal N}$ as a set of passengers to share taxi-rides. Each passenger $i \in {\cal N}$ is associated with a tuple $(v^{\sf s}_i, v^{\sf d}_i, t^{\sf s}_i, t^{\sf d}_i)$, where $v^{\sf s}_i$ is the source location, $v^{\sf d}_i$ is the destination location, $t^{\sf s}_i$ is the earliest departure time, and $t^{\sf d}_i$ is the latest arrival time. Let $K$ be the maximum number of passengers that can share a ride. A canonical resource is a ride $r$, associated with a sequence of locations $(v^1,...,v^m)$ and a sequence of arrival timeslots $(t^1,...,t^m)$ in an increasing order. We assume that there is a large pool of available taxicabs, and we do not consider a specific taxicab. The feasible set ${\mathscr R}(G)$ is a set of rides shared by a coalition of travelers $G$, if the following conditions are satisfied:
\begin{enumerate}

\item $|G| \le K$;

\item All the locations $(v^{\sf s}_i, v^{\sf d}_i)_{i \in G}$ are present in the sequence $(v^1,...,v^m)$;

\item ${\sf t}_r(v^{\sf s}_i) < {\sf t}_r(v^{\sf d}_i)$, ${\sf t}_r(v^{\sf s}_i) \ge t^{\sf s}_i$ and ${\sf t}_r(v^{\sf d}_i) \le t^{\sf d}_i$ for all $i \in G$, where ${\sf t}_r(v)$ is the arrival timeslot of ride $r$ at location $v$. 

\end{enumerate}
Note that the hotel room sharing problem may be regarded as a one-dimensional version of the taxi-ride sharing problem, if the preferred location constraint is not considered, and we let each tuple $(t^{\sf in}_i, t^{\sf out}_i)$ in hotel room sharing problem be the source and destination locations.
Let ${\mathscr F}(r)$ is the set of road segments traversed by ride $r$, and ${\mathscr F}_i(r)$ be the set of road segments that $i$ travels in ride $r$. For each $f \in {\mathscr F}(r)$, let $c^f$ be the taxi fare for road segment $f$. 

\medskip

\subsubsection{Pass Sharing}

Consider ${\cal N}$ as a set of regular-pass holders who want to form coalitions to share some anonymous passes. Each user $i \in {\cal N}$ is associated with a set of required usage timeslots ${\sf T}_i$. Let $K$ be the maximum number of sharing users, so as to limit the hassle of circulating the pass. A canonical resource is a pass $r$, associated with a set of allowable timeslots ${\sf T}_r$.  We assume that there is a large pool of available passes, and we do not consider a specific pass. The feasible set ${\mathscr R}(G)$ is a set of passes shared by a coalition of travelers $G$, if the following conditions are satisfied:
\begin{enumerate}

\item $|G| \le K$;

\item $T_i\cap T_j=\varnothing \text{ for all }i,j \in G, i\neq j$ (i.e., no one overlaps in their required timeslots);

\item $\bigcup_{i \in G}{\sf T}_i \subseteq {\sf T}_r$.

\end{enumerate}
This setting also applies to sharing physical properties (e.g., houses, cars and parking lots).
Let ${\mathscr F}(r)$ are the set of timeslots required by pass $r$, and $c^f$ be the cost of each timeslot in ${\mathscr F}(r)$. A user needs to cover the cost when he uses the pass or shares the cost with other participants when no one uses it. 
Hence, let ${\mathscr F}_i(r)= {\sf T}_i \cup \big({\sf T}_r\backslash(\bigcup_{j \in G} {\sf T}_j)\big)$, when $i$ shares pass $r$ in coalition $G$.

\medskip

\subsection{Cost-Sharing Mechanisms}

A coalition of participants $G$ are supposed to share the cost $c(G)$. Let the cost (or payment) contributed by participant $i \in G$ be $p_{i}(G)$. 
The utility of participant $i$ is given by: 
\begin{equation} \label{eqn:util}
u_i\big(p_{i}(G)\big) \triangleq c_i - p_{i}(G)
\end{equation}
The following natural properties can be satisfied by payment function $p_{i}(\cdot)$:
\begin{itemize}

\item  {\em Budget Balance}: $p_{i}(\cdot)$ is said to be {\em budget balanced}, if $\sum_{i \in G} p_{i}(G) = c(G)$ for every $G \subseteq {\cal N}$. 

\item  {\em Non-negative Payment}: $p_{i}(\cdot)$ is said to be {\em non-negative}, if $p_i(G) \ge 0$ for every $G \in {\cal P} \in {\mathscr P}_K$. If non-positive payment is allowed, then it possible that $p_i(G) < 0$ for some $i \in G$.

\end{itemize}

This paper considers the following fair cost-sharing mechanisms. Note that only usage based cost-sharing mechanism takes into account the notion of resources, while the other cost-sharing mechanisms do not rely on the notion of resources.
\begin{enumerate}

\item {\bf Equal-split Cost-Sharing}: The cost is split equally among all participants: for $i \in G$,
\begin{equation}
p^{\rm eq}_{i}(G) \triangleq \frac{c(G)}{|G|}
\end{equation}

\item {\bf Proportional-split Cost-Sharing}: The cost is split proportionally according to the participants' default costs: for $i \in G$,
\begin{equation}
p^{\rm pp}_{i}(G)\triangleq \frac{c_i \cdot c(G)}{\sum_{j \in G} c_j} 	
\end{equation}
Namely, $u_i\big(p^{\rm pp}_{i}(G)\big) = c_i \cdot \frac{(\sum_{j \in G} c_j) - c(G)}{\sum_{j \in G} c_j}$. This approach is also called Matthew's effect in \cite{ADN09}. 

\item {\bf Bargaining Based Cost-Sharing}: One can formulate the cost-sharing problem as a bargaining game with a feasible set and a disagreement point. In our model, the feasible set is the set of utilities $(u_i)_{i \in G}$, such that $\sum_{i \in G} u_i \le \sum_{i \in G}c_i - c(G)\ (\Leftrightarrow \sum_{i \in G} p_i \ge c(G))$. The disagreement point is $(u_i = 0)_{i \in G}$, such that each participant pays only the respective default cost. There are two bargaining solutions in the literature \cite{bargainbook}:

\begin{itemize}

\item {\bf Egalitarian Bargaining Solution} is given by:
\begin{equation}
p^{\rm ega}_{i}(G) \triangleq c_i - \frac{(\sum_{j \in G} c_j) - c(G)}{|G|}
\end{equation}
Namely, every participant in each coalition has the same utility: $u_i\big(p^{\rm ega}_{i}(G)\big) = \frac{(\sum_{j \in G} c_j) - c(G)}{|G|}$ for all $i \in G$.
Note that non-positive payment is possible because it may need to compensate those with low default costs to reach equal utility at every participant\footnote{For example, consider $G = \{i, j, k\}$, such that $c_j = c_k = c(G) = 1$ and $c_i = 0.1$. Then $p^{\rm ega}_{i}(G) = -0.26$.}.

\item {\bf Nash Bargaining Solution} is given by:
\begin{equation}
\big( p^{\rm nash}_{i}(G) \big)_{i \in G} \in \arg\max_{(p_{i}(G))_{i \in G}} \prod_{i \in G} u_i\big(p_{i}(G)\big)
\end{equation}
subject to
\begin{equation}
\qquad  \sum_{i \in G} p_{i}(G) = c(G) \notag
\end{equation}
One can impose an additional constraint of non-negative payments: $p_{i}(G) \ge 0$ for all $i \in G$. 

\end{itemize}


\item {\bf Usage Based Cost-Sharing}: Also known as Shapley cost-sharing \cite{albers08ndg}. We consider a cost-sharing mechanism that takes into account the usage structure of resources of participants. Recall that $r(G)$ denotes a lowest-cost canonical resource for coalition $G$. Let ${\cal N}^f(r(G))$ be the set of participants that share the same facility $f$ in $r(G)$. The cost is split equally among the participants who utilize the same facilities:
\begin{equation}
p^{\rm ub}_{i}(G)\triangleq \sum_{f \in {\mathscr F}_i(r(G))} \frac{c^f}{|{\cal N}^f(r(G))|}
\end{equation}
For example, in taxi-ride sharing, passengers will split the cost equally for each road segment with those passengers traveled together in the respective road segment. 

\end{enumerate}

\subsection{Stable Coalition and Strong Price of Anarchy}

Given payment function $p_{i}(\cdot)$, a coalition of participants $G$ is called a {\em blocking coalition} with respect to coalition structure ${\cal P}$ if 
all participants in $G$ can {\it strictly} reduce their payments when they form a coalition $G$ to share the cost instead. A coalition structure is called {\em stable coalition structure}, denoted by $\hat{\cal P}_K \in {\mathscr P}_K$, if there exists no blocking coalition with respect to $\hat{\cal P}_K$. The existence of a stable coalition structure depends on the cost-sharing mechanism (see Appendix). 

Note that a stable coalition structure is also a strong Nash equilibrium\footnote{A strong Nash equilibrium is a Nash equilibrium, in which no group of players can cooperatively deviate in an allowable way that benefits all of its members.} in our model. However, there is a difference between the case of sharing canonical resources and that of limited resources. For sharing canonical resources, an additional member can join an existing coalition to create a larger coalition, only if all of the participants in the new coalition will not be worse-off after the change. Otherwise, the existing coalition can always reject the additional member by keeping the current canonical resource. However, for sharing limited resources, a coalition may be forced to accept an additional member, even they will be worse-off, because they cannot find a new resource to share with. In this case, a strong Nash equilibrium may not be a stable coalition structure.  

Define the {\em Strong Price of Anarchy} (SPoA) as the worst-case ratio between the cost of a stable coalition structure and that of a social optimum over any feasible costs subject to (C1)-(C2):
\begin{equation}
{\sf SPoA}_K \triangleq \max_{c(\cdot),~\hat{\cal P}_K}\frac{c(\hat{\cal P}_K)}{c({\cal P}^\ast_K)}
\end{equation}
Specifically, the strong price of anarchy when using specific cost-sharing mechanisms are denoted by ${\sf SPoA}^{\rm eq}_K$,  ${\sf SPoA}^{\rm pp}_K$, ${\sf SPoA}^{\rm ega}_K$, ${\sf SPoA}^{\rm nash}_K$, ${\sf SPoA}^{\rm ub}_K$, respectively.

\section{Preliminary Results}

Before we derive the SPoA for various cost-sharing mechanisms, we present some preliminary results that will be useful in our proofs. In the following we denote the $K$-th harmonic number by $\cH_K\triangleq\sum_{s=1}^K\frac{1}{s}$.

\begin{customthm}{1}\label{thm:gen}
Recall the default coalition structure ${\cal P}_{\rm self} \triangleq \big\{\{i\} : i \in {\cal N} \big\}$. We have
\begin{equation}
K \cdot c({\cal P}^\ast_K) \ge c({\cal P}_{\rm self})
\end{equation}
Consider a budget balanced payment function $p_{i}(\cdot)$. Let $\hat{\cal P}_K$ be a respective $K$-capacitated stable coalition structure. Then,
\begin{equation}
c({\cal P}_{\rm self}) \ge c(\hat{\cal P}_K)
\end{equation}
Hence, the SPoA for $p_{i}(\cdot)$ is  upper bounded by ${\sf SPoA}_K \le K$.
\end{customthm}
\begin{proof}
First, by monotonicity, we obtain for any $G\in {\cal P}^\ast_K$
\begin{equation}
c(G) \ge \max_{i \in G}\{c_i\} \ge \frac{\sum_{i \in G} c_i}{|G|} \ge \frac{\sum_{i \in G} c_i}{K}
\end{equation}
Hence,
\begin{equation}
 c({\cal P}^\ast_K) = \sum_{G \in {\cal P}^\ast_K} c(G) \ge \frac{1}{K} \sum_{i \in {\cal N}} c_i = \frac{c({\cal P}_{\rm self})}{K}
\end{equation}

Since ${\cal P}^\ast_K$ is a stable coalition structure, then $p_{i}(G) \le c_i$ for every $G \in {\cal P}^\ast_K$. Otherwise, every $i \in G$ can strictly reduce his payment by forming a singleton coalition individually.

Lastly, since $p_{i}(\cdot)$  is a budget balanced payment function, it follows that
\begin{equation}
 c({\cal P}_{\rm self}) = \sum_{i \in {\cal N}}c_i \ge \sum_{G\in\hat{\cal P}_K}\sum_{i\in G} p_{i}(G) = c(\hat{\cal P}_K)
\end{equation}
\end{proof}

However, we will show that the SPoA for various cost-sharing mechanisms is $O(\log K)$ or $O(\sqrt{K}\log K)$.

\medskip
  
To derive an upper bound for the SPoA, the following lemma provides a general tool. First, define the following notation for a non-negative payment function $p_i(\cdot)$:
\begin{equation}\label{sum}
\alpha\big(\{p_{i}(\cdot)\}_{i \in {\cal N}}\big)\triangleq\max_{c(\cdot),~H_1\supset\cdots\supset H_{K}} \frac{\sum_{s=1}^{K} p_{i_s}(H_s)}{c(H_1)},
\end{equation}
where $H_1, ..., H_K$ are a collection of subsets, such that each $H_s \triangleq \{i_s,...,i_{K}\}$ for some $i_1,...,i_K \in {\cal N}$. Note that  $\alpha(\cdot)$ is non-decreasing in $K$. See Appendix for a proof.

\begin{customlemma}{1}\label{lem:common}
Suppose $p_{i}(\cdot)$ is a budget balanced non-negative payment function. Given a $K$-capacitated stable coalition structure $\hat{\cal P}_K$, and a feasible coalition structure ${\cal P} \in {\mathscr P}_K$, then  
$$
\frac{c(\hat{\cal P}_K)}{c({\cal P})} \le \alpha\big(\{p_{i}(\cdot)\}_{i \in {\cal N}}\big)
$$
Thus, if $\hat{\cal P}_K$ is a worst-case stable coalition structure and ${\cal P} = {\cal P}^\ast_K$ is a social optimal coalition structure, then we obtain an upper bound for the SPoA with respect to $\{p_{i}(\cdot)\}_{i \in {\cal N}}$:
$$
{\sf SPoA}_K \le \alpha\big(\{p_{i}(\cdot)\}_{i \in {\cal N}}\big)
$$
\end{customlemma}
\begin{proof}
Let ${\cal P} = \{G_1,...,G_h\}$. Define $H_1^1\triangleq G_1$. Then there exists a participant $i_1^1\in H_1^1$ and a coalition $\hat G_1^1\in\hat{\cal P}_K$, such that $i_1^1\in\hat G_1^1$ and $p_{i_1^1}(H_1^1)\ge p_{i_1^1}(\hat G_1^1)$; otherwise, all the participants in $H_1^1$ would form a coalition $H_1^1$ to strictly reduce their payments, which contradicts the fact that $\hat{\cal P}_K$ is a stable coalition structure.

Next, define $H_2^1\triangleq H_1^1\backslash\{i_1^1\}$. Note that $H_2^1$ is a feasible coalition, because arbitrary coalition structures with at most $K$ participants per coalition are allowed in our model.
By the same argument, there exists $i_2^1\in H_2^1$ and a coalition $\hat G_2^1\in\hat{\cal P}_K$, such that $i_2^1\in\hat G_2^1$ and $p_{i_2^1}(H_2^1)\ge p_{i_2^1}(\hat G_2^1)$.

Let $G_t = \{i^t_1,...,i^t_{K_t}\}$, for any $t\in\{1,...,h\}$. Continuing this argument, we obtain a collection of sets $\{H_s^t\}$, where each $H_s^t \triangleq\{i^t_s,...,i^t_{K_t}\}$ satisfies the following condition:
\begin{quote}

for any $t\in\{1,...,h\}$ and $s\in\{1,..., K_t\}$, there exists $\hat G^t_s\in\hat{\cal P}_K$, such that $i_s^t\in\hat G^t_s$ and 
$p_{i^t_s}(H_s^t)\ge p_{i^t_s}(\hat G^t_s)$

\end{quote}
  
Hence, the SPoA, ${\sf SPoA}_K$, with respect to $\{p_{i}(\cdot)\}_{i \in {\cal N}}$ is upper bounded by
\begin{align}
\frac{c(\hat{\cal P}_K)}{c({\cal P})} &=\frac{\sum_{t=1}^{h} \sum_{s=1}^{K_t}p_{i^t_s}(\hat{G}^t_s)}{\sum_{t=1}^h c(G_t)}\\
&\le \frac{\sum_{t=1}^{h} \sum_{s=1}^{K_t}p_{i^t_s}(H^t_s)}{\sum_{t=1}^hc(H^t_1)}\\
&\le\max_{t\in\{1,...,h\}}\frac{ \sum_{s=1}^{K_t}p_{i^t_s}(H^t_s)}{c(H^t_1)}\le \alpha\big(\{p_{i}(\cdot)\}_{i \in {\cal N}}\big)
\end{align}
because $\alpha(\cdot)$ is non-decreasing in $K$.
\end{proof}

Note that \cite{RS09} uses an approach called {\it summability} similar to that of Lemma~\ref{lem:common}. Informally, a payment function (or cost-sharing mechanism) $p_i(\cdot)$ is said to be {\it $\alpha$-summable} if for every subset $H$ of participants and every possible ordering $\sigma$ on $H$, the sum of the payments of the participants as they are added one-by-one according to $\sigma$ is at most $\alpha\cdot c(H)$.
However, \cite{RS09} relies on the notion of cross-monotonicity for proving summability. A payment function $p_i(\cdot)$ is said to satisfy {\it cross-monotonicity}, if for any $G\subseteq G'$, $p_i(G')\le p_i(G)$. \cite{RS14costsharing} showed that if a payment function satisfies cross-monotonicity in a network cost-sharing game, then summability can bound the price of anarchy. Also, cross-monotonicity implies that a Nash equilibrium is a strong Nash equilibrium. Nonetheless, our model is simpler than network cost-sharing games; Lemma~\ref{lem:common} shows that $\alpha\big(\{p_i(\cdot)\}\big)$ can be used to bound ${\sf SPoA}_K$ without the assumption of cross-monotonicity. In particular, many payment functions may violate cross-monotonicity (e.g., egalitarian, Nash bargaining solution, equal-split and proportional-split), and hence, the approach in \cite{RS09} will not apply to these payment functions. 

\section{Equal-split Cost-Sharing Mechanism}

\begin{customthm}{2} \label{thm:poa_eq}
For equal-split cost-sharing, the SPoA is upper bounded by
$$
{\sf SPoA}^{\rm eq}_K \le \cH_K = \Theta(\log K)
$$
\end{customthm}

\begin{proof} 
Applying Lemma~\ref{lem:common} with $p_i=p_i^{\rm eq}$, we obtain 
\begin{align}
{\sf SPoA}^{\rm eq}_K &\le\alpha\big(\{p_i^{\rm eq}(\cdot)\}_{i \in {\cal N}}\big)\\
& = \max_{c(\cdot),~H_1\supset\cdots\supset H_{K}} \frac{1}{c(H_1)}\sum_{s=1}^{K}\frac{c(H_s)}{K-s+1}\\
& \le \sum_{s = 1}^{K} \frac{1}{s} = \cH_K
\end{align}
which follows from the monotonicity of cost function, $c(H_s)\le c(H_1)$. 
\end{proof}

\subsection{Tight Example} \label{sec:tightexample}

We also present a tight example to show that ${\sf SPoA}^{\rm eq}_K = \Theta(\log K)$. There are $K \cdot K!$ participants, indexed by $${\cal N} = \{i_s^t \mid t = 1,...,K!, s = 1,...,K\}$$ 

For any non-empty subset $G \subseteq {\cal N}$, we define the cost $c(G)$ as follows:
\begin{itemize}

\item Case 1: If $G \subseteq \{ i^t_1,...,i^t_K\}$ for some $t \in \{1,...,K\}$, then we set $c(G) = 1$.

\item Case 2: If $G \subseteq \{ i^{(k-1) \cdot (K-s+1) + 1}_s,...,i^{k \cdot (K-s+1)}_s\}$ for some $k \in \{1 ,..., \frac{K!}{K-s+1}\}$ and $s \in \{1,...,K\}$, then we set $c(G) = 1$. 

\item Case 3: Otherwise, $c(G) = |G|$.


\end{itemize}
It is evident that the preceding setting of cost function $c(\cdot)$ satisfies monotonicity, because  $c(G) = 1$ for cases 1 and 2, otherwise $c(G) = |G| \ge |H| \ge c(H)$ for all $H \subseteq G$.
When $K= 3$, the tight example is illustrated in Fig.~\ref{fig:tight-ex}. 
 
\begin{figure}[!htb]
\center
\includegraphics[width=0.3	\textwidth]{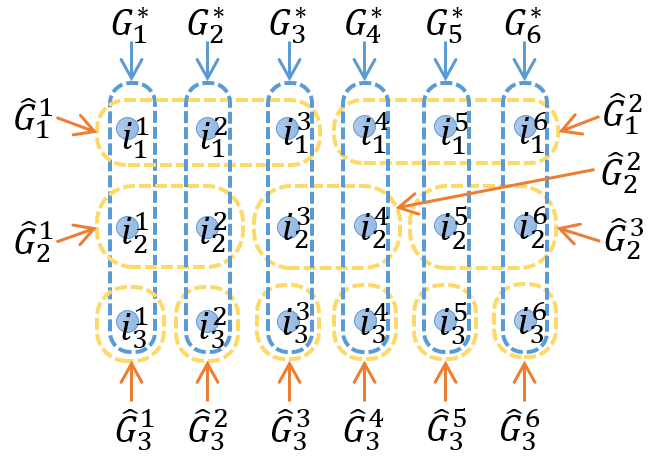}
\caption{An illustration of tight example when $K=3$, where $c(\{i^t_1\}) = c(\{i^t_2\}) = c(\{i^t_3\}) = 1$, $c(\{i^t_1, i^t_2, i^t_3\}) = c(\{i^t_1, i^t_2\}) = c(\{i^t_2, i^t_3\}) = c(\{i^t_1, i^t_3\}) = 1$ for $t=1,...,6$. Also, $c(\{i^1_2, i^2_2\}) = c(\{i^3_2, i^4_2\}) = c(\{i^5_2, i^6_2\}) = 1$ and $c(\{i^1_1, i^2_1, i^3_1\}) = c(\{i^4_1, i^5_1, i^6_1\}) = 1$. The coalitions in orange dotted lines $\{ \hat{G}_s^k \}$ are a stable coalition structure, whereas the coalitions in blue dashed lines $\{ G^\ast_t \}$ are a social optimum. }
\label{fig:tight-ex}
\end{figure} 
 
Let $\hat{G}_s^k \triangleq \{ i^{(k-1) \cdot (K-s+1) + 1}_s,...,i^{k \cdot (K-s+1)}_s\}$, where $s\in\{1,...,K\}, k\in\{1,...,\frac{K!}{K-s+1}\}$. And let $G^\ast_t \triangleq \{i^t_1,...,i^t_K \}$, where $t = \{1,...,K!\}$. See an illustration in Fig.~\ref{fig:tight-ex}.

By equal-split cost-sharing, if $i_s^t \in G$, then
\begin{equation}
p^{\rm eq}_{i_s^t}(G) = 
\left\{
\begin{array}{rl}
\frac{1}{|G|}, & \mbox{\ if\ } G \subseteq \hat{G}_s^k \\
\frac{1}{|G|}, & \mbox{\ if\ } G \subseteq G^\ast_t \\
> \frac{1}{|G|}, &  \mbox{\ otherwise }
\end{array}
\right.
\end{equation}

Note that $p^{\rm eq}_{i_s^t}(\hat{G}_s^k) = p^{\rm eq}_{i_s^t}(G^\ast_t\backslash \{i_1^t,...,i_{s-1}^t\}) = \frac{1}{K-s+1}$. Hence, $i_1^t$ will not switch from coalition $\hat{G}_1^k$ to $G^\ast_t$. Neither will $i_s^t$ switch from coalition $\hat{G}_s^k$ to $G^\ast_t \backslash \{i_1^t,...,i_{s-1}^t\}$.

One can check that $\{ \hat{G}_s^k \}$ are a stable coalition structure, whereas $\{ G^\ast_t \}$ are a social optimum.
Hence, the SPoA is lower bounded by
\begin{equation}
{\sf SPoA}^{\rm eq}_K \ge \frac{\sum_{t=1}^{K!} \sum_{s=1}^{K} p^{\rm eq}_{i_s^t}(\hat{G}_s^k)}{\sum_{t=1}^{K!} c(G^\ast_t)} = \sum_{s = 1}^{K} \frac{1}{K-s+1} = \cH_K 
\end{equation}

\subsection{Proportional-split Cost-Sharing Mechanism}\label{sec:pp}

Given cost function $c(\cdot)$, we define a truncated cost function $\tilde{c}(\cdot)$ as follows: 
\begin{equation} \label{eqn:truncated}
\tilde{c}(G) \triangleq \left\{
\begin{array}{ll}
c(G), & \mbox{\ if\ } c(G) \le \sum_{j \in G} c_j\\
\sum_{j \in G} c_j, & \mbox{\ if\ } c(G) > \sum_{j \in G} c_j\\
\end{array}
\right.
\end{equation}
Note that $\tilde{c}(G) \le \sum_{j \in G} c_j$ for any $G$.

Let ${\sf SPoA}^{\rm pp}_K(c(\cdot))$ be the SPoA with respect to cost function $c(\cdot)$ specifically.

\begin{customlemma}{2} \label{lem:subadd_pp}
For proportional-split cost-sharing, 
$$
{\sf SPoA}^{\rm pp}_K(c(\cdot)) = {\sf SPoA}^{\rm pp}_K(\tilde{c}(\cdot)).
$$
\end{customlemma}

\begin{proof}
First, we show that if $\hat{\cal P}$ is a stable coalition structure, then for any $G \in \hat{\cal P}$, we have $c(G) \le \sum_{j \in G} c_j$. If we assume $c(G) > \sum_{j \in G} c_j$ for some $G \in \hat{\cal P}$, then for all $i \in G$,
\begin{equation}
p^{\rm pp}_{i}(G) = \frac{c_i \cdot c(G)}{\sum_{j \in G} c_j} > c_i
\end{equation} 
Namely, every $i \in G$ can strictly reduce his payment by forming a singleton coalition individually.
This is a contradiction to the fact that $\hat{\cal P}$ is a stable coalition structure.

Second, we note that if ${\cal P}^\ast$ is a social optimum, then for any $G \in {\cal P}^\ast$, we have $c(G) \le \sum_{j \in G} c_j$. Otherwise, ${\cal P}^\ast$ does not attain the least total cost by including $G$.

Therefore, we obtain $\frac{c(\hat{\cal P})}{c({\cal P}^\ast)} = \frac{\tilde{c}(\hat{\cal P})}{\tilde{c}({\cal P}^\ast)}$ and ${\sf SPoA}^{\rm pp}_K(c(\cdot)) = {\sf SPoA}^{\rm pp}_K(\tilde{c}(\cdot))$.
\end{proof}

\begin{customthm}{3} \label{thm:poa_pp1}
For proportional-split cost-sharing, the SPoA is upper bounded by
$$
{\sf SPoA}^{\rm pp}_K \le \log K + 2.
$$
\end{customthm}

\begin{proof}
By Lemma~\ref{lem:subadd_pp} we may assume without loss of generality that  $c(G) \le \sum_{j \in G} c_j$ for any $G$.
Applying Lemma~\ref{lem:common} with $p_i=p_i^{\rm pp}$, we obtain 
\begin{align}
{\sf SPoA}^{\rm pp}_K & \le\alpha\big(\{p_i^{\rm pp}(\cdot)\}_{i \in {\cal N}}\big)\\
& =\max_{c(\cdot),~H_1\supset\cdots\supset H_{K}} \frac{1}{c(H_1)}\sum_{s=1}^{K}p_{i_s}^{\rm pp}(H_s) 
\end{align}
where $H_s = \{i_s, ..., i_K \}$ for some $i_1, ..., i_K \in {\cal N}$, with default costs denoted by $\{c_s, ..., c_K \}$.

Since $p^{\rm pp}_{i_s}(H_s) = \frac{c_s c(H_s)}{\sum_{t=s}^{K} c_t}$, we obtain
\begin{equation}
{\sf SPoA}^{\rm pp}_K \le \max_{\{c_s, c(H_s)\}_{s=1}^{K}} \frac{1}{ c(H_1)} \sum_{s=1}^{K} \frac{c_s c(H_s)}{\sum_{t=s}^{K} c_t} 
\end{equation}
subject to $c(H_1) \ge ... \ge c(H_K)$ and $c(H_s) \ge \max\{c_s,...,c_K\}$ (by monotonicity), and $c(H_s) \le \sum_{t =s}^{K} c_t$ (by Lemma~\ref{lem:subadd_pp}).
Without loss of generality, we assume $c(H_1) = 1$. Hence, $c_s \le c(H_s) \le c(H_1) = 1$. 

Let $\hat{s}$ be the smallest integer, such that $\sum_{t=\hat{s}+1}^{K} c_t \le 1$ (and
hence $\sum_{t=s}^{K} c_t > 1$ for any $s \le \hat{s}$). 
If $s \ge \hat{s}$, we obtain $\frac{c_s c(H_s)}{\sum_{t=s}^{K} c_t} \le c_s$ by our assumption.
If $s < \hat{s}$, we obtain $\frac{c_s c(H_s)}{\sum_{t=s}^{K} c_t} \le \frac{c_s}{\sum_{t=s}^{K} c_t}$.
Therefore,
\begin{align}
{\sf SPoA}^{\rm pp}_K  \le & \max_{\{c_s\}_{s=1}^{K}}\big( \sum_{s=\hat{s}}^{K} c_s + \sum_{s=1}^{\hat{s}-1} \frac{c_s}{\sum_{t=s}^{K} c_t} \big) \\
& \le  2 +  \sum_{s=1}^{\hat{s}-1} \frac{c_s}{\sum_{t=s}^{K} c_t}  \le  2 + \log K
\end{align}
which follows from Lemma~\ref{lem:f_min2} and $c_{\hat{s}} \le 1$. 
\end{proof}

Note that one can strengthen the SPoA by ${\sf SPoA}^{\rm pp}_K \le \cH_K $ for $K\le 6$. We can apply the same tight example in Sec.~\ref{sec:tightexample} to show that ${\sf SPoA}^{\rm pp}_K = \Theta(\log K)$ because $p_i^{\rm pp}(G) = p_i^{\rm eq}(G)$ in this example.

\medskip

\begin{customlemma}{3} \label{lem:f_min2}
If $0 \le c_s \le 1$ for all $s\in\{1,...,K\}$ and $\sum_{s=\hat{s}}^{K}c_s \ge 1$ for some $\hat{s} \le K$, then
$$
\sum_{s=1}^{\hat{s}-1} \frac{c_s}{\sum_{t=s}^{K} c_t} \le \log K
$$
\end{customlemma}

\begin{proof}
First, since $\frac{1}{x+y} \le \frac{1}{x+\delta}$ for any positive numbers $x, y, \delta$, such that $\delta \le y$, we obtain
\begin{equation}
\frac{y}{x+y} \le \int_{0}^{y} \frac{1}{x+\delta} {\rm d} \delta = \log(x+y) - \log x
\end{equation}

It follows that 
\begin{equation}
\frac{c_s}{\sum_{t=s}^{K} c_t} \le \log(\sum_{t=s}^{K} c_t) - \log (\sum_{t=s+1}^{K} c_t)
\end{equation}
Hence,
\begin{align}
\sum_{s=1}^{\hat{s}-1} \frac{c_s}{\sum_{t=s}^{K} c_t} & \le \log(\sum_{s=1}^{K}c_s) - \log(\sum_{s=\hat{s}}^{K}c_s) \\
& \le \log K - \log(1) = \log K
\end{align} 
because $\sum_{s=1}^{K}c_s \le K$ and $\sum_{s=\hat{s}}^{K}c_s \ge 1$.
\end{proof}

\section{Bargaining Based Cost-Sharing Mechanisms}

First, we show that the SPoA for egalitarian bargaining solution and Nash bargaining solution (irrespective of the constraint of non-negative payments) are equivalent. However, there is a difficulty, when we apply Lemma~\ref{lem:common} to obtain an upper bound for the SPoA -- we can only obtain an upper bound as $O(K)$ by the payment function of egalitarian bargaining solution, whereas the payment function of Nash bargaining solution under the constraint of non-negative payments is not convenient to analyze. But we show a property in Nash bargaining solution, namely that there always exists a coalition structure that satisfies positive payments and its cost is at most $(\sqrt{K}+1)$ from that of a given coalition structure. We then obtain an upper bound as $O(\sqrt{K} \log K)$ for the SPoA using this property.

\subsection{Equivalence between Egalitarian and Nash Bargaining}

\begin{customlemma}{4} \label{lem:eganash}
If the constraint of non-negative payments is not considered in Nash bargaining solution, then egalitarian and Nash bargaining solutions are equivalent: $p^{\rm ega}_{i}(G) = p^{\rm nash}_{i}(G)$. 
\end{customlemma}
\begin{proof}
This follows from the fact that the feasible set of Nash bargaining solutions is bounded by the hyperplane $\sum_{i \in G} u_i = \sum_{i \in G}c_i - c(G)$.  The maximal of $\prod_{i \in G} u_i$ (i.e., Nash bargaining solution) is attained at the point $u_i = u_j$ for any $i, j \in G$.
\end{proof}

\begin{customlemma}{5} \label{lem:nashpositive}
For Nash bargaining solution (irrespective of the constraint of non-negative payments), given a stable coalition structure $\hat{\cal P} \in {\mathscr P}_K$ and $G\in \hat{\cal P}$, then every participant has non-negative payment: $p_i^{\rm nash}(G) \ge 0$ for all $i\in G$. 
\end{customlemma}
Namely, each stable coalition structure with the constraint of non-negative payments coincides with a stable coalition structure without the constraint of non-negative payments.
\begin{proof}
Consider a coalition $G = \{i_1, ..., i_K \} \in \hat{\cal P}$, with default costs denoted by $\{c_1, ..., c_K \}$.
We prove the statement by contradiction.
Assume that $p_{i_s}^{\rm nash}(G) < 0$ for some $i_s \in G$. Then, 
\begin{equation}
c_s < \frac{\sum_{j \in G} c_j - c(G)}{|G|} \  \Rightarrow \ 
c_s < \frac{\sum_{j \in G:j\ne i_s} c_j-c(G)}{|G|-1}
\end{equation}
Thus, we obtain
\begin{align}
&\frac{\sum_{j \in G} c_j-c(G)}{|G|} = \frac{c_s+\sum_{j \in G:j\ne i_s} c_j-c(G)}{|G|} \\
< & \frac{(\frac{1}{|G|-1}+1)\big(\sum_{j \in G:j\ne i_s} c_j-c(G)\big)}{|G|}\\
= & \frac{\sum_{j \in G:j\ne i_s} c_j-c(G)}{|G|-1} \\
\le & \frac{\sum_{j \in G:j\ne i_s} c_j-c(G\backslash\{i_s\})}{|G|-1}
\end{align}
where $c(G\backslash\{i_s\}) \le c(G)$ by monotonicity of the cost function. For any $i_k \ne i_s$,
\begin{align}
p_{i_k}^{\rm nash}(G\backslash\{i_s\})&=c_k-\frac{\sum_{j \in G:j\ne i_s} c_j-c(G\backslash\{i_s\})}{|G|-1}\\
&<c_k-\frac{\sum_{j \in G} c_j-c(G)}{|G|}=p_{i_k}^{\rm nash}(G)
\end{align}
Namely, all users in $G\backslash\{i_s\}$ would reduce strictly their payments by switching to the coalition $G\backslash\{i_s\}$. 
This is a contradiction to the fact that $\hat{\cal P}$ is a stable coalition structure.
\end{proof}

\begin{customcor}{1} \label{cor:nashega}
For egalitarian bargaining solution and Nash bargaining solution (irrespective of the constraint of non-negative payments), their SPoA are equivalent:
$${\sf SPoA}^{\rm ega}_K = {\sf SPoA}^{\rm nash}_K$$
\end{customcor}
\begin{proof}
This follows from Lemma~\ref{lem:eganash} and Lemma~\ref{lem:nashpositive}.
\end{proof}

Given cost function $c(\cdot)$, we define a truncated cost function $\tilde{c}(\cdot)$ by Eqn.~(\ref{eqn:truncated}) in Sec.~\ref{sec:pp}, such that $\tilde{c}(G) \le \sum_{j \in G} c_j$ for any $G$.

\begin{customlemma}{6} \label{lem:subadd_ega}
For egalitarian and Nash bargaining solutions (irrespective of the constraint of non-negative payments), we obtain
\begin{align*}
& {\sf SPoA}^{\rm ega}_K(c(\cdot)) = {\sf SPoA}^{\rm nash}_K(c(\cdot)) \\
= \ & {\sf SPoA}^{\rm ega}_K(\tilde{c}(\cdot)) = {\sf SPoA}^{\rm nash}_K(\tilde{c}(\cdot)).
\end{align*}
\end{customlemma}

\begin{proof}
The proof is similar to that of Lemma~\ref{lem:subadd_pp}. First, we show that if $\hat{\cal P}$ is a stable coalition structure, then for any $G \in \hat{\cal P}$, we obtain $c(G) \le \sum_{t \in G} c_t$ based on a contradiction. If we assume $c(G) > \sum_{t \in G} c_t$ for some $G \in \hat{\cal P}$, then for all $i \in G$,
\begin{equation}
p^{\rm ega}_{i}(G) = c_i - \frac{(\sum_{t \in G} c_t) - c(G)}{|G|} > c_i
\end{equation} 
This is a contradiction, since $\hat{\cal P}$ cannot be a stable coalition structure.
Second, we note that if ${\cal P}^\ast$ is a social optimum, then for any $G \in {\cal P}^\ast$, we obtain $c(G) \le \sum_{t \in G} c_t$. 
Therefore, by Corollary~\ref{cor:nashega}, we obtain ${\sf SPoA}^{\rm ega}_K(c(\cdot)) = {\sf SPoA}^{\rm nash}_K(c(\cdot)) = {\sf SPoA}^{\rm ega}_K(\tilde{c}(\cdot)) = {\sf SPoA}^{\rm nash}_K(\tilde{c}(\cdot))$.
\end{proof}

\subsection{Bounding Strong Price of Anarchy}

\begin{customthm}{4} \label{thm:poa_ega1}
For egalitarian bargaining solution and Nash bargaining solution (irrespective of the constraint of non-negative payments),
the SPoA is upper bounded by
\begin{equation}
{\sf SPoA}^{\rm ega}_K = {\sf SPoA}^{\rm nash}_K = O(\sqrt{K} \log K)
\end{equation} 
\end{customthm}

\begin{proof}
First, by Lemma~\ref{lem:subadd_ega}, it suffices to consider egalitarian bargaining solution with cost function satisfying $c(G) \le \sum_{j \in G} c_j$ for any $G$. 

Next, by Lemma~\ref{lem:nashstb1}, there exists a coalition structure $\tilde{\cal P}$, such that $p^{\rm nash}_{i}(G) > 0$ for all $i \in G \in \tilde{\cal P}$, and $c(\tilde{\cal P}) \le (\sqrt{K}+1) \cdot c({\cal P}^\ast_K)$, where ${\cal P}^\ast_K$ is a social optimum.
For a stable coalition structure $\hat{\cal P}_K$, 
$$
\frac{c(\hat{\cal P}_K)}{c({\cal P}^\ast_K)} \le (\sqrt{K}+1) \cdot \frac{c(\hat{\cal P}_K)}{c(\tilde{\cal P})}
$$
Let $\tilde{\cal P} = \{G_1,...,G_h\}$, We apply the similar argument of Lemma~\ref{lem:common} to obtain that for any $t\in\{1,...,h\}$ and $s\in\{1,..., K_t\}$, there exists $\hat G^t_s\in\hat{\cal P}_K$, such that $i_s^t\in\hat G^t_s$ and 
$p_{i^t_s}(H_s^t)\ge p_{i^t_s}(\hat G^t_s)$. Therefore,
\begin{align}
& \frac{c(\hat{\cal P}_K)}{c(\tilde{\cal P})} =\frac{\sum_{t=1}^{h} \sum_{s=1}^{K_t}p^{\rm ega}_{i^t_s}(\hat{G}^t_s)}{\sum_{t=1}^h c(G_t)}\\
\le & \frac{\sum_{t=1}^{h} \sum_{s=1}^{K_t}p^{\rm ega}_{i^t_s}(H^t_s)}{\sum_{t=1}^hc(H^t_1)} \le \alpha\big(\{p^{\rm ega}_{i}(\cdot)\}\big)
\end{align}
Hence, ${\sf SPoA}^{\rm ega}_K \le  (\sqrt{K}+1) \cdot \alpha\big(\{p^{\rm ega}_{i}(\cdot)\}\big)$.

Let $H_s = \{i_s, ..., i_K \}$, with the default costs denoted by $\{c_s, ..., c_K \}$.
Recall that egalitarian bargaining solution is given by
$$
p^{\rm ega}_{i_s}(H_s) = c_s - \frac{(\sum_{t=s}^{K}c_t) - c(H_s)}{K-s+1} 
$$
subject to $c(H_1) \ge ... \ge c(H_K)$ and $c(H_s) \ge \max\{c_s,...,c_K\}$ (by monotonicity), and $c(H_s) \le \sum_{t =s}^{K} c_t$ (by Lemma~\ref{lem:subadd_ega}).  Finally, it follows that ${\sf SPoA}^{\rm eqa}_K \le O(\sqrt{K} \log K)$ by Lemma~\ref{lem:nashstb2}. 
\end{proof}

We can apply the same tight example in Sec.~\ref{sec:tightexample} to show that ${\sf SPoA}^{\rm ega}_K = {\sf SPoA}^{\rm nash}_K \ge \Theta(\log K)$ because $p_i^{\rm ega}(G) = p_i^{\rm eq}(G)$ in this example.

\medskip

\begin{customlemma}{7} (\cite{AEMNP11nash} Theorem 2) \label{lem:nash2}
Consider a coalition $G = \{i_1, ..., i_K \}$, with corresponding default costs denoted by $\{c_1, ..., c_K \}$, such that $c_1 \ge c_2 \ge ... \ge c_{K}$.
If the constraint of non-negative payments is considered, then Nash bargaining solution can be expressed by 
$$
p^{\rm nash}_{i_s}(G) = 
\left\{
\begin{array}{ll}
c_s - \frac{(\sum_{t=1}^{m} c_t) - c(G)}{m}, & \mbox{\ if\ } s \le m \\
0, & \mbox{\ otherwise}
\end{array}
\right.
$$
where $m$ is the largest integer such that $c_m > \frac{(\sum_{t=1}^{m-1} c_t) - c(G)}{m-1}$.
\end{customlemma}

\begin{customlemma}{8} \label{lem:nashstb1}
Suppose $c(G) \le \sum_{j \in G} c_j$ for any $G$.
Given any coalition structure ${\cal P} \in {\mathscr P}_K$, there exists a coalition structure $\tilde{\cal P} \in {\mathscr P}_K$, such that $p^{\rm nash}_{i}(G) > 0$ for all $i \in G \in \tilde{\cal P}$, and $c(\tilde{\cal P}) \le (\sqrt{K} +1)\cdot c({\cal P})$.
\end{customlemma}

\begin{proof}
See Appendix.
\end{proof}

\begin{customlemma}{10} \label{lem:nashstb2}
Let $b_s \triangleq  c(H_s)$.
Consider the following maximization problem:
\begin{align}
& (\textsc{M1}) \ y^\ast(K) \triangleq  \max_{\{c_s, b_s\}_{s=1}^K}  \sum_{s=1}^{K} \Big( c_s - \frac{(\sum_{t=s}^{K}c_t) - b_s}{K-s+1} \Big) \label{P2}\\
& \text{subject to} \notag\\
& \quad b_s\le \sum_{t=s}^Kc_t,~\text{for all } s=1,...,K-1, \\
& \quad  0\le c_s\le b_s\le b_{s+1}\le 1,~\text{for all } s=1,...,K, \\
& \quad b_1 + K c_s - \sum_{t=1}^{K}c_t \ge 0,~\text{for all } s =1,...,K \label{con:positive-}
\end{align}
The maximum of (\textsc{M1}) is upper bounded by $y^\ast(K) \le 1 + \cH_{K-1}  = O(\log K)$.
\end{customlemma}
In Lemma~\ref{lem:nashstb2}, constraint~(\ref{con:positive-}) captures positive payment for every participant in $H_1$. 
\begin{proof}
See Appendix.
\end{proof}

\section{Usage Based Cost-Sharing Mechanism}

Recall that $r(G)$ is the lowest cost canonical resource for $G$, and ${\mathscr F}_i(r(G))$ is the set of facilities utilized by participant $i \in G$ in canonical resource $r(G)$. First, for each subset $L \subseteq G$, we define 
\begin{equation}
X_G(L) \triangleq \sum_{f\in \big(\bigcap_{i \in L} {\mathscr F}_{i}(r(G)) \big) \big\backslash  \big(\bigcup_{i \in G \backslash L}{\mathscr F}_{i}(r(G))\big)} c^f
\end{equation}
Namely, $X_G(L)$ is the total cost of facilities of canonical resource $r(G)$ that are only used by the coalition $L$ exclusively. See Fig.~\ref{fig:usage-ex0} for an illustration of $X_G(L)$.

\begin{figure}[!htb]
\centering
\includegraphics[width=0.27	\textwidth]{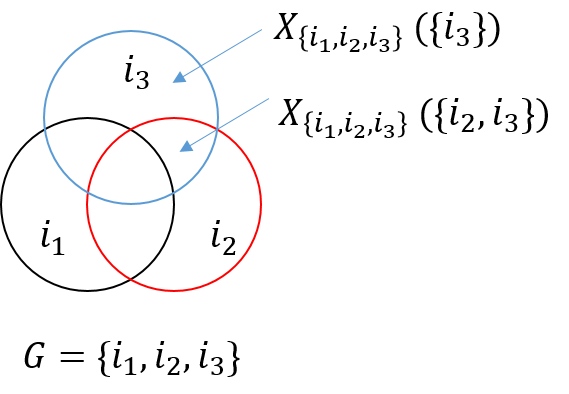}
\caption{Consider $G=\{i_1, i_2, i_3\}$. The three circles depict the sets ${\mathscr F}_{i_1}(r(G)), {\mathscr F}_{i_2}(r(G)), {\mathscr F}_{i_3}(r(G))$. If $c^f$ is represented by a unit area, then $X_G(L)$ is the intersection area of $L$.}
\label{fig:usage-ex0}
\end{figure}

Usage based payment function can be reformulated as
\begin{equation}
p^{\rm ub}_{i}(G) = \sum_{L \subseteq G: i\in L} \frac{X_G(L)}{|L|}
\end{equation}
Given a set of participants $\{i_1, ... i_K \}$ and $H_s \triangleq \{i_s, ..., i_K\}$, we simply write $X_s(L) \triangleq X_{H_s}(L)$.

In general, the strong price of anarchy of usage based cost-sharing can be $\Omega(K)$.

\begin{customthm}{5} \label{thm:poa_ub0}
For usage based cost-sharing in general settings, there exists an instance, such that
$$
{\sf SPoA}^{\rm ub}_K  = \Omega(K).
$$
\end{customthm}

\begin{proof}
See Appendix.
\end{proof}

\subsection{Monotone Utilization}

Generally ${\sf SPoA}^{\rm ub}_K  = \Omega(K)$, but we next present a general sufficient condition for inducing ${\sf SPoA}^{\rm ub}_K = \Theta(\log K)$.
The set of facilities utilized by participants ${\mathscr F}_i(\cdot)$ are said to satisfy {\em monotone utilization}, if for all $H \subseteq G$,
\begin{equation}
\sum_{f \in \cup_{i \in H}{\mathscr F}_{i}(r(H))} c^f \le \sum_{f \in \cup_{i \in H}{\mathscr F}_{i}(r(G))} c^f
\end{equation}
Namely, the total cost of facilities utilized by a subset of participants increases in a larger coalition. Note that monotone utilization condition implies monotonicity of cost function.

For hotel room sharing problem, a set of days of a participant stays in a room booking does not depend on the coalition, and hence, ${\mathscr F}_{i}(r(H)) = {\mathscr F}_{i}(r(G))$. For taxi-ride sharing problem, one needs to travel a greater distance in order to pick-up and drop-off other passengers when sharing with more passengers, and hence, monotone utilization condition is satisfied. Although pass sharing problem does not generally satisfy monotone utilization condition, we will later prove that the SPoA for pass sharing problem with uniform average cost is also $\Theta(\log K)$.

\begin{customthm}{6} \label{thm:poa_ub1}
Consider usage based cost-sharing, such that ${\mathscr F}_i(\cdot)$ satisfies the monotone utilization condition. Then 
\begin{equation}
{\sf SPoA}^{\rm ub}_K \le \cH_K = \Theta(\log K).
\end{equation} 
\end{customthm}
\begin{proof}
See Appendix.
\end{proof}

We can apply the same tight example in Sec.~\ref{sec:tightexample} to show that ${\sf SPoA}^{\rm ub}_K = \Theta(\log K)$. We set $X_G(L)=1$ when $L = G$, otherwise $X_G(L)=0$. This can satisfy monotone utilization condition. It follows that $p_i^{\rm ub}(G) = p_i^{\rm eq}(G)$ in this example.

\subsection{Pass Sharing}

Pass sharing problem violates monotone utilization condition. But we can bound the SPoA for pass sharing problem with uniform average cost $c^f = 1$.

\begin{customthm}{7} \label{thm:poa_ub2}
Consider usage based cost-sharing for pass sharing problem with uniform average cost $c^f = 1$. Then 
\begin{equation}
{\sf SPoA}^{\rm ub}_K \le \cH_K+1=\Theta(\log K)
\end{equation} 
\end{customthm}
\begin{proof}
See Appendix.
\end{proof}


\section{Existence of Stable Coalition Structures} \label{sec:exist}

This section completes the study by investigating the existence of stable coalition structures considering different cost-sharing mechanisms.
First, we define a {\em cyclic preference} as sequences $(i_1,..., i_s)$ and $(G_1,..., G_s)$, where $i_k \in G_k\cap G_{k+1}$ for all $k \le s-1$, and $i_s \in G_s\cap G_1$, such that
\begin{eqnarray*}
u_{i_1}(p_{i_1}(G_1)) & > & u_{i_1}(p_{i_1}(G_2)), \\
u_{i_2}(p_{i_2}(G_2)) & > & u_{i_2}(p_{i_2}(G_3)), \\
& \vdots & \notag \\
u_{i_s}(p_{i_s}(G_s)) & > & u_{i_s}(p_{i_s}(G_1))
\end{eqnarray*}

\begin{customlemma}{11} \label{lem:cyc}
If there exists no cyclic preference, there always exists a stable coalition structure. Furthermore, such a stable coalition structure can be found in time $n^{O(K)}$.
\end{customlemma}

\begin{proof}
See Appendix.
\end{proof}

\medskip

By Lemma~\ref{lem:cyc}, we can show that there always exists a stable coalition structure for equal-split, proportional-split cost sharing mechanisms, egalitarian bargaining solution and Nash bargaining solution. See Appendix for full proofs.
In general, usage based cost-sharing can induce cyclic preference, and hence, possibly the absence of a stable coalition structure. However, we show the existence of a stable coalition structure in some special cases, for example, pass sharing problem and hotel room sharing problem when $K=2$ (see Appendix).
\vspace{-5pt}
\section{Conclusion}\label{sec:concl}

Sharing economy is a popular paradigm for social and economic interactions with distributed decision-making processes. Motivated by the applications of sharing economy, this paper studies a coalition formation game with a constraint on the maximum number of sharing participants per each coalition. This coalition formation game can capture a number applications of sharing economy, such as hotel room, taxi-ride and pass sharing problems. A number of cost-sharing mechanisms are considered, wherein each participant is interested in joining a coalition with a lower payment in the respective cost-sharing mechanism. We study stable coalitions, wherein no coalition of participants can deviate unilaterally to form lower cost coalitions, as the likely self-interested outcomes. 

To quantify the inefficiency of distributed decision-making processes, we show that the Strong Price of Anarchy (SPoA) between a worst-case stable coalition and the social optimum for egalitarian and Nash solutions is $O(\sqrt{K} \log K)$, whereas the one for equal-split, proportional-split, and usage based cost-sharing (under monotone consumption condition or for pass sharing problem) is only $\Theta(\log K)$, where $K$ is the maximum capacity of sharing participants. Therefore, distributed decision-making processes under common fair cost-sharing mechanisms induce only moderate inefficiency.

The SPoA for egalitarian and Nash solutions (i.e., $O(\sqrt{K} \log K)$) is not known to be tight. It is interesting to see if the gap will be closed. Furthermore, the SPoA for specific problems (e.g., hotel room, taxi-ride and pass sharing problems) may be strictly smaller than $\Theta(\log K)$. A companion study of empirical SPoA for taxi-ride sharing using real-world taxi data can be found in \cite{CTE16}.

\bibliographystyle{ieeetr}
\bibliography{reference}

\clearpage

\section*{Appendix} 
\subsection{Monotone Property of $\alpha(\cdot)$}
We show that $\alpha(\cdot)$ is non-decreasing in $K$. Recall that 
\begin{equation}
\alpha_K\big(\{p_{i}(\cdot)\}_{i \in {\cal N}}\big)\triangleq\max_{c(\cdot),~H_1\supset\cdots\supset H_{K}} \frac{\sum_{s=1}^{K} p_{i_s}(H_s)}{c(H_1)},
\end{equation}
where $H_1, ..., H_K$ are a collection of subsets, such that each $H_s \triangleq \{i_s,...,i_{K}\}$ for some $i_1,...,i_K \in {\cal N}$. We explicitly use subscript $K$ to denote the dependence of $K$. For brevity, consider $K = K' + 1 $. Suppose
\begin{equation}
\alpha_{K'}=  \frac{\sum_{s=1}^{K'} p_{i_s}(H_s)}{c(H_1)}
\end{equation}
for a cost function $c(\cdot)$ and some subsets $H_1, ..., H_{K'}$, where $|H_1|=K'$. Then we construct a new set $H_1 \cup \{ i \}$ for some element $i$, and let $c(H_1 \cup \{ i \}) = c(H_1)$ (which still satisfies monotonicity). Then
$$
\alpha_{K} \ge \frac{p_{i}(H_1 \cup \{ i \}) + \sum_{s=1}^{K'} p_{i_s}(H_s)}{c(H_1 \cup \{ i \})} \ge \alpha_{K'}
$$
because $p_{i}(\cdot)$ is non-negative.

\subsection{Bargaining Based Cost-Sharing Mechanisms}

\begin{customlemma}{7} (\cite{AEMNP11nash} Theorem 2) \label{lem:nash2}
Consider a coalition $G = \{i_1, ..., i_K \}$, with corresponding default costs denoted by $\{c_1, ..., c_K \}$, such that $c_1 \ge c_2 \ge ... \ge c_{K}$.
If the constraint of non-negative payments is considered, then Nash bargaining solution can be expressed by 
$$
p^{\rm nash}_{i_s}(G) = 
\left\{
\begin{array}{ll}
c_s - \frac{(\sum_{t=1}^{m} c_t) - c(G)}{m}, & \mbox{\ if\ } s \le m \\
0, & \mbox{\ otherwise}
\end{array}
\right.
$$
where $m$ is the largest integer such that $c_m > \frac{(\sum_{t=1}^{m-1} c_t) - c(G)}{m-1}$.
\end{customlemma}

\begin{customlemma}{8} \label{lem:nashstb1}
Suppose $c(G) \le \sum_{j \in G} c_j$ for any $G$.
Given any coalition structure ${\cal P} \in {\mathscr P}_K$, there exists a coalition structure $\tilde{\cal P} \in {\mathscr P}_K$, such that $p^{\rm nash}_{i}(G) > 0$ for all $i \in G \in \tilde{\cal P}$, and $c(\tilde{\cal P}) \le (\sqrt{K} +1)\cdot c({\cal P})$.
\end{customlemma}

\begin{proof}
Given coalition structure ${\cal P}$, we construct coalition structure $\tilde{\cal P}$ as follows. For each $G \in {\cal P}$, we sort the participants $G = \{i_1,...,i_m \}$ in the decreasing order of their default costs, such that $c_1 \ge c_2 \ge ... \ge c_m$.
We next split $G$ into $R$ sub-groups $\{N_1,..., N_R\}$, where $i_1 ,..., i_{m_1} \in N_1$, $i_{m_1 + 1} ,..., i_{m_2} \in N_2, \cdots, i_{m_{R-1} + 1} ,..., i_{m_R} \in N_R$, such that for $k=m_{t-1}+1,..., m_t$ 
\begin{equation}
c_k - \frac{\sum_{l = m_{t-1}+1}^{k} c_l - c(\{i_{m_{t-1}+1},...,i_k\})}{k - m_{t-1}} > 0  \label{e111}
 \end{equation}
\begin{equation}
c_{i_{m_t+1}} - \frac{\sum_{l = m_{t-1}+1}^{m_t} c_l - c(\{i_{m_{t-1}+1},...,i_{m_t+1}\})}{m_t - m_{t-1}} \le 0 \label{e112}
\end{equation}
where $t \in \{1, ..., R \}$ and $m_0 = 0$.
By monotonicity of $c(\cdot)$ and the ordering on $c_i$, Eqns.~\raf{e111}-\raf{e112} imply that each $i_k \in N_t$ satisfies
$$
c_k - \frac{\sum_{l = m_{t-1}}^{m_{t}+1} c_l - c(N_t)}{m_{t} - m_{t-1}} > 0, 
$$
and
$$
c_k - \frac{\sum_{l = m_{t-2}+1}^{m_{t-1}} c_l - c(N_{t-1})}{m_{t-1} - m_{t-2}} \le 0
$$
By Lemma~\ref{lem:nash2}, the above conditions can guarantee positive payments in Nash bargaining solution. We then replace each coalition $G \in {\cal P}$ by a collection of coalitions $N_1,..., N_R$. We call such a coalition structure $\tilde{\cal P}$.

Note that for each $i_k \in N_{t+1}$, 
\begin{equation}
c_k \le \frac{\sum_{l = m_{t-1}+1}^{m_{t}} c_l - c(N_t)}{m_{t} - m_{t-1}} \le \frac{n_{t} c_{m_{t-1}+1} - c(N_t)}{n_{t}} 
\end{equation}
where $n_t \triangleq |N_t|=m_{t} - m_{t-1}$.
Without loss of generality, we assume 
$c(G) = 1$. Let $\bar{c}_t \triangleq c_{m_{t-1} + 1}$ and $C_t \triangleq c(N_t)$. 
It is evident to see that $1 \ge \bar{c}_1 > \bar{c}_2 > ... > \bar{c}_R>0$.
Since $c(G) \le \sum_{j \in G} c_j$ for any $G$, $C_t \le n_t \bar{c}_t$ for all $t \in \{1, ..., K\}$. 

We next upper bound $\sum_{t=1}^{R} C_t$ by the following optimization problem (\textsc{S1}):
\begin{align}
(\textsc{S1}) \quad &\max_{(C_t,n_t)_{t=1}^R}  \sum_{t=1}^{R}  C_t\\
\text{subject to}& \quad 0\le C_t \le 1, ~\text{for all } t \in \{1, ..., R\}\\
 &\quad \sum_{t=1}^{R} n_t \le K \\
 & \quad  C_t\le n_t(\bar{c}_t - \bar{c}_{t+1}), \text{ for all }t \in \{1, ..., R\},\label{e1111}
\end{align}
where we assume $\bar{c}_{R+1}=0$.
Since in (\textsc{S1}) the lower bounds on $n_t$ are only present in Constraints \raf{e1111}, we assume 
$
n_t=\frac{C_t}{y_t},  
$
where ${y_t}\triangleq\bar{c}_t - \bar{c}_{t+1}$ for $t=1,...,R$, and obtain
\begin{align}
(\textsc{S2}) \quad &\max_{(C_t)_{t=1}^R}  \sum_{t=1}^{R}  C_t\\
\text{subject to}& \quad 0\le C_t \le 1, ~\text{for all } t \in \{1, ..., R\}\\
 &\quad \sum_{t=1}^{R} \frac{C_t}{y_t} \le K
\end{align}
Note that (\textsc{S2}) is simply a {\it fractional knapsack problem}. Suppose that $(y_t)_{t=1}^R$ are arranged in a {\it non-increasing} order, $y_{1}\ge y_{2}\ge\cdots\ge y_{R}$. Let $\ell$ be the largest index such that 
\begin{equation}
\sum_{t=1}^{\ell}\frac{1}{y_{t}}\le K~~\text{ and }~~\sum_{t=\ell+1}^{R}\frac{1}{y_{t}}> K
\end{equation}
Then the optimal solution $(C_t^\ast)_{t=1}^R$ to (\textsc{S2}) is given by Lemma~\ref{lem:fkp}.
Hence, the optimal value of (\textsc{S2}) is $\sum_{t=1}^R C_t^\ast \le \ell+1$.  
Note that 
\begin{equation}
\sum_{t=1}^Ry_t=\sum_{t=1}^R(\bar{c}_t-\bar{c}_{t+1})=\bar{c}_1\le 1
\end{equation}
By the arithmetic mean-harmonic mean inequality (i.e., $\sum_{t=1}^\ell \frac{y_{t}}{\ell} \ge \frac{\ell}{\sum_{t=1}^\ell  \frac{1}{y _{t}}}$), we obtain
\begin{equation}
\frac{\ell}{K}\le\frac{\ell}{\sum_{t=1}^\ell  \frac{1}{y _{t}}}\le\sum_{t=1}^\ell \frac{y_{t}}{\ell}\le \frac{1}{\ell}
\end{equation}
Hence, it follows that $\ell\le \sqrt{K}$, and the maximum of (\textsc{S2}) is upper bounded by $\sqrt{K}+1$.

Therefore, this completes the proof by
\begin{equation}
\sum_{G \in {\cal P}} \sum_{t=1}^R c(N_t) \le \sum_{G \in {\cal P}} (\sqrt{K}+1) \cdot c(G) \ \Rightarrow \ c(\tilde{\cal P}) \le (\sqrt{K}+1)  \cdot c({\cal P}).
\end{equation}
\end{proof}

\begin{customlemma}{9} \label{lem:fkp}
The fractional knapsack problem is defined by
\begin{align}
(\textsc{FKP}) \quad &\max_{(C_t)_{t=1}^R}  \sum_{t=1}^{R}  C_t\\
\text{subject to}& \quad 0\le C_t \le 1, ~\text{for all } t \in \{1, ..., R\}\\
 &\quad \sum_{t=1}^{R} \frac{C_t}{y_t} \le K
\end{align}
Suppose that $(y_t)_{t=1}^R$ are positive and arranged in a {\it non-increasing} order, $y_{1}\ge y_{2}\ge\cdots\ge y_{R}$. Let $\ell$ be the largest index such that 
\begin{equation}
\sum_{t=1}^{\ell}\frac{1}{y_{t}}\le K~~\text{ and }~~\sum_{t=\ell+1}^{R}\frac{1}{y_{t}}> K
\end{equation}
Then the optimal solution $(C_t^\ast)_{t=1}^R$ to \textsc{(FKP)} is given by 
$$
C_{t}^\ast=\left\{\begin{array}{ll}
1, &\text{ if\ } t \in \{1,...,\ell\},\\
\min\{K-\sum_{j=1}^\ell\frac{1}{y_{j}},1\}, &\text{ if\ } t=\ell+1,\\
0, &\text{ if\ } t \in \{\ell+2,...,R\}
\end{array}
\right.
$$
\end{customlemma}
\begin{proof}
The proof follows from a well-known result in knapsack problem (for example, see \cite{KSPbook} Theorem 2.2.1).
\end{proof}

\begin{customlemma}{10} \label{lem:nashstb2}
Let $b_s \triangleq  c(H_s)$.
Consider the following maximization problem.
\begin{align}
& (\textsc{M1}) \ y^\ast(K) \triangleq  \max_{\{c_s, b_s\}_{s=1}^K}  \sum_{s=1}^{K} \Big( c_s - \frac{(\sum_{t=s}^{K}c_t) - b_s}{K-s+1} \Big) \label{P2}\\
& \text{subject to} \notag\\
& \quad b_s\le \sum_{t=s}^Kc_t,~\text{for all } s\in\{1,...,K-1\}, \\
 &\quad 0\le c_s\le b_s\le b_{s+1}\le 1,~\text{for all } s\in\{1,...,K\}, \\
 &\quad b_1 + K c_s - \sum_{t=1}^{K}c_t \ge 0,~\text{for all } s\in\{1,...,K\}. \label{con:positive}
\end{align}
The maximum is upper bounded by $y^\ast(K) \le 1 + \sum_{s=1}^{K-1}\frac{1}{s} = O(\log K)$.
\end{customlemma}
\begin{proof}
Clearly, it is enough to bound (\textsc{M1}) subject to the relaxation:
\begin{eqnarray}
0\le c_s\le b_s\le b_{s+1}\le 1,~\text{for all } s\in\{1,...,K\},\label{eee-1}\\
b_1 + K c_s - \sum_{t=1}^{K}c_t \ge 0,~\text{for all } s\in\{1,...,K\}.\label{eee-2}
\end{eqnarray}
As the coefficients of $b_s$ in (\textsc{M1}) are positive, it is clear that setting $b_s=1$, for all $s$, will maximize  (\textsc{M1}) without violating the Constraints (\ref{eee-1}) and (\ref{eee-2}). It is then enough to show that the optimum value $z^\ast$ of the following linear program is at most $1-\frac{1}{K}$.
\begin{align}
& z^\ast \triangleq\max_{c_s} \sum_{s=1}^{K} \frac{(K-s+1) c_s - \sum_{t=s}^{K}c_t}{K-s+1} \label{p1}\\
& \text{subject to} \notag\\
&\quad \max\left\{0,\frac{\sum_{t=1}^Kc_t - 1}{K}\right\} \le c_s \le  1, ~\text{for all } s\in\{1,...,K\}.\label{eeee-1}
\end{align}
We write 
\begin{align}\label{ee1}
f(c_1,...,c_K)&\triangleq \sum_{s=1}^{K} \frac{(K-s+1) c_s - \sum_{t=s}^{K}c_t}{K-s+1} \\
& =\sum_{s=1}^Kc_s\alpha_s,
\end{align}
where $\alpha_s\triangleq1-\sum_{t=1}^{s}\frac{1}{K-t+1}$. 
Denote by
$c_s^\ast$, for $s=1,...,K,$ the optimal values maximizing $f(c_1,...,c_K)$ subject to the constraints given in Eqn.~(\ref{eeee-1}).  
Note that 
\begin{align}\label{eee1}
\sum_{s=1}^{K}\alpha_s&=K-\sum_{s=1}^{K}\sum_{t=1}^s\frac{1}{K-s+1}\\
&=K-\sum_{t=1}^{K}\frac{1}{K-s+1}\sum_{s=t}^K 1=0
\end{align}

\begin{claim}
Let $(c_1,...,c_K)\ne \mathbf{0}$ be a basic feasible solution (BFS) of LP Eqn.~(\ref{p1}). Then, there exists a partition $S_1\cup S_o$ of $[K]$ such that and $c_s=1$ for all $s\in S_1$, and $c_s=\bar c\triangleq\frac{K-h-1}{K-h}$ for all $s\in S_o$, where $h\triangleq|S_o|\le K-1$.
\end{claim}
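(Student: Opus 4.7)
My plan is to read off the structure of any BFS of LP Eqn.~(\ref{p1}) by identifying which of its inequalities must be tight. Writing out the max in Eqn.~(\ref{eeee-1}), the LP has exactly three families of inequalities on each variable $c_s$: the upper bound $c_s \le 1$, the plain lower bound $c_s \ge 0$, and the coupled lower bound $K c_s - \sum_{t=1}^{K} c_t + 1 \ge 0$. Since there are $K$ variables, a BFS is characterized by $K$ linearly independent tight inequalities among these $3K$.

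The crucial observation I will exploit is that if the coupled inequality is tight at two distinct indices $s, s' \in [K]$, then subtracting the two equations $K c_s - \sum_t c_t + 1 = 0$ and $K c_{s'} - \sum_t c_t + 1 = 0$ yields $K(c_s - c_{s'}) = 0$, hence $c_s = c_{s'}$. This is the mechanism that will force all ``non-$1$'' variables at which the coupled bound is active to share a single value $\bar c$. A second observation I will use is that whenever $c_s = 0$ is tight for some $s$, the coupled inequality at that same $s$ reads $-\sum_t c_t + 1 \ge 0$, so $\sum_t c_t \le 1$; conversely, when $\sum_t c_t > 1$ no variable can sit at $0$.

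With these in hand I would partition $[K]$ into $S_1 = \{s : c_s = 1\}$ and $S_o = [K] \setminus S_1$, with $h = |S_o|$, and perform a short case analysis. When the plain lower bound $c_s \ge 0$ is not active on $S_o$, every $s \in S_o$ must have the coupled bound tight, so by the key observation all of $S_o$ takes one common value $\bar c$; substituting $\sum_t c_t = |S_1| + h\bar c$ into $K\bar c = \sum_t c_t - 1$ and solving yields $\bar c = (|S_1|-1)/(K-h) = (K-h-1)/(K-h)$ as claimed, with $\bar c \ge 0$ forcing $|S_1| \ge 1$. When some $s \in S_o$ does hit $c_s = 0$, the second observation bounds $\sum_t c_t \le 1$, which combined with $c_s = 1$ on $S_1$ forces $|S_1| \le 1$; the option $|S_1| = 0$ is excluded because it would give $c \equiv \mathbf{0}$, so $|S_1| = 1$, $h = K-1$, and every $s \in S_o$ lies at zero, matching the formula $\bar c = (K-h-1)/(K-h) = 0$.

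The only subtle point, and the one I expect to be the main obstacle, is the edge case $|S_1| = 1$ in which the tight constraint on an $S_o$-variable can be interpreted as either $c_s = 0$ or the coupled equation, since both collapse to the same relation once $\sum_t c_t = 1$. I need to argue that this degeneracy is harmless because both interpretations produce the same numerical value $\bar c = 0$, which is exactly what the formula predicts at $h = K-1$. Once that degeneracy is handled, the bound $h \le K-1$ is automatic from the derivation that $|S_1| \ge 1$ in every nonzero BFS.
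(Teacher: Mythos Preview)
Your proposal is correct and follows essentially the same route as the paper's proof: identify the tight constraints at a BFS, conclude that each $c_s$ sits either at the upper bound $1$ or at the coupled lower bound, and solve for the common value $\bar c$. Your case analysis (splitting on whether the plain bound $c_s\ge 0$ is ever active on $S_o$) is in fact more careful than the paper's argument, which simply asserts $S_1\cup S_o=[K]$ after ruling out $\sum_t c_t<1$ and leaves the counting of independent tight constraints implicit.
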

\begin{proof}
A BFS $(c_1,...,c_K)$ of the LP Eqn.~(\ref{p1}) is determined by exactly $K$ equations among the inequalities in Eqn.~(\ref{eeee-1}).

We first note that we must have $\max\left\{0,\frac{\sum_{t=1}^Kc_t - 1}{K}\right\}=\frac{\sum_{t=1}^Kc_t - 1}{K}$; otherwise, $\sum_{t=1}^Kc_t<1$ implies that $c_s=0$ for all $s$ (since the inequalities in Eqn.~(\ref{eeee-1}) in this case reduce to $0\le c_s\le 1$ and the BFS will pick one of these two inequalities for each $s$).

Let $S_1\triangleq\{s:c_s=1\}$ and $S_o\triangleq\{s:c_s=\frac{\sum_{t=1}^Kc_t - 1}{K}\}$, and note that $S_1\cup S_o=\{1,...,K\}$. Write $x\triangleq\frac{\sum_{t=1}^{K}c_t-1}{K}=\frac{hx+K-h-1}{K}$. Then $x=\bar c$.
\end{proof}
Let $(c_1,...,c_K)$ be a BFS of Eqn.~(\ref{eeee-1}) defined by the partition $S_o\cup S_1$. Then substituting the value of $c_s$ in Eqn.~(\ref{ee1}) and using Eqn.~(\ref{eee1}) we obtain
\begin{eqnarray}\label{eeeee-1}
f(c_1,...,c_K)&=&\bar c\sum_{s\in S_o}\alpha_s+\sum_{s\in S_1}\alpha_s\nonumber\\
&=&(1-\bar c)\sum_{i\in S_1}\alpha_s=\frac{\sum_{s\in S_1}\alpha_s}{|S_1|}
\end{eqnarray}
Note that $\alpha_s>\alpha_{s+1}$ for all $s=1,...,K-1$. 
It follows that the choice $S_1=\{1\}$ maximizes $f(c_1,...,c_K)$ and the lemma follows.
\end{proof}

\subsection{Usage Based Cost-Sharing Mechanism}

\begin{customthm}{5} \label{thm:poa_ub0}
For usage based cost-sharing in general settings, there exists an instance, such that
$$
{\sf SPoA}^{\rm ub}_K  = \Omega(K).
$$
\end{customthm}

\begin{proof}
We construct a similar instance to the one in Sec.~\ref{sec:tightexample}. There are $K \cdot K!$ participants, indexed by ${\cal N} = \{i_s^t \mid t = 1,...,K, s = 1,...,K!\}$. 
For any non-empty subset $G \subseteq {\cal N}$ and $L \subseteq G$, we define $X_G(L)$ as follows: 
\begin{itemize}

\item Case 1: If $G \subseteq \{ i^t_1,...,i^t_K\}$ for some $t \in \{1,...,K\}$, then 
\begin{itemize}

\item
If $|G| = 1$, then $X_G(L) = 1$.

\item 
If $|G| > 1$, then let $s = \min\{s' \mid i^t_{s'} \in G\}$, and
\begin{equation}
X_G(L) = 
\left\{
\begin{array}{rl}
\frac{1}{2}, & \mbox{if\ } L = \{ i^t_s \}\\
\frac{1}{2}, & \mbox{if\ } L = G\backslash\{ i^t_s \}\\
0, & \mbox{otherwise\ } \\
\end{array}
\right.
\end{equation}

\end{itemize}
See an illustration of an example for setting $X_G(L)$ in Fig.~\ref{fig:usage-ex}.

\begin{figure}[!htb]
\centering
\includegraphics[width=0.5	\textwidth]{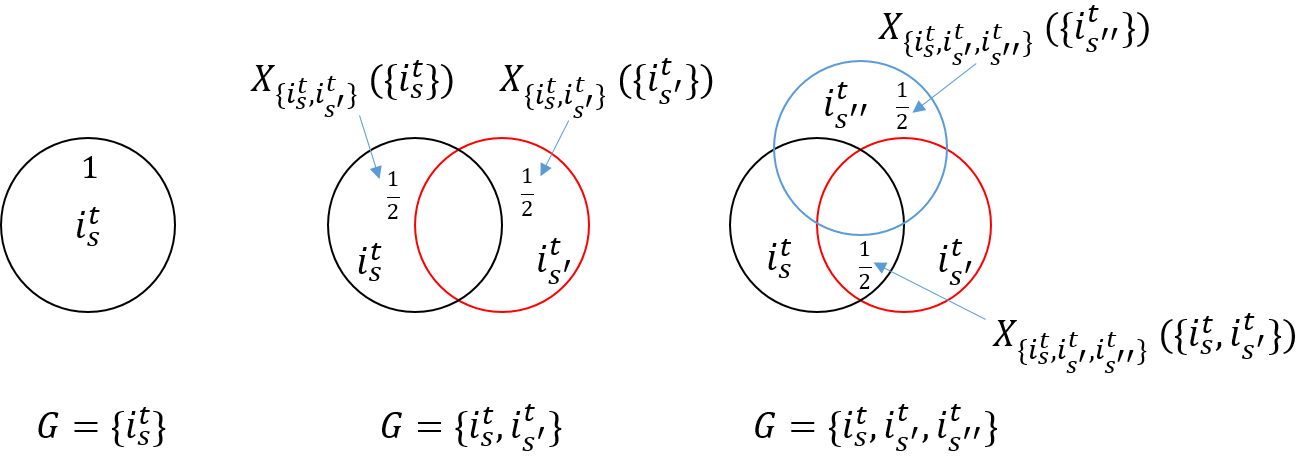}
\caption{An illustration of an example for setting $X_G(L)$ in case 1. Suppose $s'' < s' < s$. $X_{\{ i_{s}^t \}}\big(\{ i_{s}^t \}\big) = 1, X_{\{ i_{s}^t, i_{s'}^t \}}\big(\{ i_{s}^t \}\big) = X_{\{ i_{s}^t, i_{s'}^t \}}\big(\{ i_{s'}^t \}\big) = X_{\{ i_{s}^t, i_{s'}^t, i_{s''}^t \}}\big(\{ i_{s''}^t \}\big) = X_{\{ i_{s}^t, i_{s'}^t, i_{s''}^t \}}\big(\{ i_{s}^t, i_{s'}^t \}\big) = \frac{1}{2}$. Otherwise, $X_G(L) = 0$.}
\label{fig:usage-ex}
\end{figure}

\item Case 2:  If $G \subseteq \{ i^{(k-1) \cdot (K-s+1) + 1}_s,...,i^{k \cdot (K-s+1)}_s\}$ for some $k \in \{1 ,..., \frac{K!}{K-s+1}\}$ and $s \in \{1,...,K\}$, then 
\begin{equation}
X_G(L) = 
\left\{
\begin{array}{rl}
1, & \mbox{if\ } L = G\\
0, & \mbox{otherwise\ } \\
\end{array}
\right.
\end{equation}

\item Case 3: Otherwise, $G$ is a partition of $m$ sets $G = H_1 \cup \cdots \cup H_m$, where each $H_t$ is a maximal subset that satisfies either Case 1 or Case 2. Then, 
\begin{equation}
X_G(L) = \sum_{s= 1}^m X_{H_s}(L \cap H_s)
\end{equation}

\end{itemize}

Let $\hat{G}_s^k \triangleq \{ i^{(k-1) \cdot (K-s+1) + 1}_s,...,i^{k \cdot (K-s+1)}_s\}$, where $s\in\{1,...,K\}, k\in\{1,...,\frac{K!}{K-s+1}\}$. And let $G^\ast_t \triangleq \{i^t_1,...,i^t_K \}$, where $t \in \{1,...,K!\}$. 
One can check that $\{ \hat{G}_s^k \}$ form a stable coalition structure, whereas $\{ G^\ast_t \}$ are a social optimum.
Note that if $H_s^t = \{i_s^t, ..., i_K^t\}$, then
\begin{equation}
p^{\rm ub}_{i_s^t}(H_s^t) \ge \frac{1}{2} 
\end{equation}
Therefore, the price of anarchy is lower bounded by
\begin{equation}
{\sf SPoA}^{\rm ub}_K \ge \frac{\sum_{t=1}^{K!} \sum_{s=1}^{K} p^{\rm ub}_{i_s^t}(H_s^t)}{\sum_{t=1}^{K!} c(H_1^t)} \ge \frac{K}{2}
\end{equation}
\end{proof}

\begin{customthm}{6} \label{thm:poa_ub1}
Consider usage based cost-sharing, such that ${\mathscr F}_i(\cdot)$ satisfies the monotone utilization condition. Then 
\begin{equation}
{\sf SPoA}^{\rm ub}_K \le \cH_K = \Theta(\log K).
\end{equation} 
\end{customthm}

\begin{proof}
Applying Lemma~\ref{lem:common} with $p_i=p_i^{\rm ub}$, we obtain 
$$
{\sf SPoA}^{\rm ub}_K \le\alpha\big(\{p_i^{\rm ub}(\cdot)\}_{i\in{\cal N}}\big)=\max_{c(\cdot),~H_1\supset\cdots\supset H_{K}}\frac{1}{c(H_1)}\sum_{s=1}^{K}p_{i_s}^{\rm ub}(H_s) 
$$
where $H_s = \{i_s, ..., i_K \}$, with the corresponding default costs denoted by $\{c_s, ..., c_K \}$.
Without loss of generality, we assume $c(H_1) = 1$.
Recall that $X_s(L) \triangleq X_{H_s}(L)$ and $p^{\rm ub}_{i_s}(H_s) = \sum_{L \subseteq H_s: i_s\in L} \frac{X_s(L)}{|L|}$.

Note that the monotone utilization condition is equivalent to saying that, for all $s\in\{1,...,K-1\}$ and $K\ge t>s$,
\begin{equation}
\sum_{L \subseteq H_t} X_t(L) \le \sum_{L \subseteq H_{s}: L\cap H_t \ne \varnothing} X_{s}(L)
\end{equation}
Hence, we can bound ${\sf SPoA}^{\rm ub}_K$ by the maximum value of the linear optimization problem (\textsc{P1}):
\begin{align}
& (\textsc{P1}) \ \max_{\{X_s(\cdot)\}_{s=1}^K} \sum_{s=1}^K\sum_{L\subseteq H_s:i_s\in L} \frac{X_s(L)}{|L|}\label{primal}\\
& \text{subject to} \notag\\
& \quad \sum_{L \subseteq H_t} X_t(L) \le \sum_{L \subseteq H_{s}:L\cap H_t \ne \varnothing} X_{s}(L), \notag\\
& \qquad \qquad \text{for all } 1\le s<t\le K, \label{b1}\\
&\quad \sum_{L\subseteq H_1}X_1(L)\le 1,~\text{for all } s\in\{1,...,K\}, \label{b2} \\
&\quad X_s(L)\ge 0,~\text{for all } s\in\{1,...,K\}, L\subseteq H_s  
\end{align}

For $s\in\{1,...,K\}$ and $L\subseteq H_s$, we define 
\begin{equation}
\rho(L,s)\triangleq\left\{
\begin{array}{ll}
\frac{1}{|L|}, &\text{ if }i_s\in L,\\
0, & \text{otherwise}
\end{array}
\right.  
\end{equation}
Then the dual problem to (\textsc{P1}) can be written as follows:
\begin{align}
& (\textsc{D1}) \ \min_{\lambda(\cdot), z} \quad z \label{dual}\\
& \text{subject to} \notag\\
& \quad \sum_{s=1}^{t-1} \lambda(s,t)- \sum_{t+1\le s\le K:L\cap H_s \ne \varnothing}\lambda(t,s)\ge \rho(L,t), \notag\\
& \qquad \qquad \text{for all }t\in\{2,...,K\},~L\subseteq H_t, \label{b3}\\
&\quad \rho(L,1)+ \sum_{2\le s\le K:L\cap H_s \ne \varnothing}\lambda(1,s)\le z,~\text{for all } L\subseteq H_1, \label{b4} \\
 &\quad \lambda(s,t)\ge 0,~\text{for all } 1\le s<t\le K,~z\ge 0
\end{align}

We next provide a primal-dual feasible pair $(X^\ast,\lambda^\ast)$ whose objective value is $\sum_{s=1}^K\frac{1}{s}$. To better understand this proof, it may be instructive to look at the example when $K=3$ in Table~\ref{tab:usage-k3}.

\medskip

\noindent
{\it Primal solution}:

For $s\in\{1,...,K\}$ and $L\subseteq H_s$, we set
\begin{equation}\label{ps}
X_s^\ast(L)\triangleq\left\{
\begin{array}{ll}
1, &\text{ if }L=H_s,\\
0, & \text{otherwise}
\end{array}
\right.   
\end{equation}
For $ s\in\{1,...,K-1\},~s<t\le K$, we obtain 
\begin{align}
1=&X^\ast_t(H_t)=\sum_{L \subseteq H_t} X_t^\ast(L)\\
=&X^\ast_s(H_s)=\sum_{L \subseteq H_{s}:L\cap H_t \ne \varnothing} X_{s}^\ast(L)
\end{align}
Also, we obtain 
\begin{equation}
\sum_{L\subseteq H_1}X_1^\ast(L)=X_1^\ast(H_1)=1
\end{equation}
Hence, $X^\ast$ satisfies Constraint~(\ref{b1}) and Constraint~(\ref{b2}). 

\medskip

\noindent
{\it Dual solution}: 

We first claim that there is a set of numbers $\lambda^\ast(s,t)\ge 0,$ for $1\le s<t\le K$ that satisfy Constraint~(\ref{b3}) as {\it equalities}. To show this, we study the linear optimization problem (\textsc{D2}):

\begin{align}
&(\textsc{D2}) \ \min_{\lambda(\cdot)} \sum_{1\le s<t\le K}\lambda(s,t) \label{dual2}\\
& \text{subject to} \notag\\
& \quad \sum_{s=1}^{t-1} \lambda(s,t)-\sum_{t+1\le s\le K:L\cap H_s \ne \varnothing} \lambda(t,s)=\rho(L,t), \notag\\
& \qquad \qquad \text{for all } t\in\{2,...,K\},~L\subseteq H_t, \label{b5}\\
& \quad \lambda(s,t)\ge 0,~\text{for all } 1\le s<t\le K
\end{align}
and its dual:
\begin{align}
& (\textsc{P2}) \ \max_{X_s(\cdot)} \sum_{s=2}^K\sum_{L\subseteq H_s:i_s\in L} \frac{X_s(L)}{|L|} \label{primal2} \\
& \text{subject to} \notag\\
& \quad \sum_{L \subseteq H_t} X_t(L) \le \sum_{L \subseteq H_{s}:L\cap H_t \ne \varnothing} X_{s}(L)+1,  \notag\\
& \qquad \qquad \text{for all } 1<s<t\le K, \label{b6}\\
& \quad \sum_{L \subseteq H_tu} X_t(L) \le 1, ~\text{for all } 1<t\le K \label{b7}
\end{align}

By Constraint~(\ref{b7}), the primal problem (\textsc{P2}) is bounded; it is also feasible since $X^\ast_s(L)$ for $s\in\{2,...,K\}$ given in Eqn.~(\ref{ps}) satisfies Constraint~(\ref{b6}) and Constraint~(\ref{b7}). It follows that the dual (\textsc{D2}) is also feasible and bounded, that is, there exist numbers $\lambda^\ast(s,t)$ satisfying Constraint~(\ref{b5}).

Let $z^\ast=\sum_{s=1}^K\frac{1}{s}$. We claim that $(\lambda^\ast,z^\ast)$ is a feasible solution to the dual problem Eqn.~(\ref{dual}). Evidently, we need only to check that it satisfies Constraint~(\ref{b4}).

For $L\subseteq H_1$, we sum the equations in Constraint~(\ref{b5}) for $L\cap H_t$ for all $t\in\{2,...,K\}$ to obtain
\begin{align}
&\sum_{t=2}^K\rho(L\cap H_t,t) \\
=&\sum_{t=2}^K\sum_{s=1}^{t-1} \lambda^\ast(s,t)-\sum_{t=2}^K\sum_{t+1\le s\le K:L\cap H_s \ne \varnothing}\lambda^\ast(t,s)\\
\ge&\sum_{t=2}^K\sum_{s=1}^{t-1} \lambda^\ast(s,t)-\sum_{t=2}^K\sum_{s=t+1}^K\lambda^\ast(t,s)\\
=&\sum_{s=1}^{K-1}\sum_{t=s+1}^{K} \lambda^\ast(s,t)-\sum_{t=2}^K\sum_{s=t+1}^K\lambda^\ast(t,s)\\
=&\sum_{s=2}^K \lambda^\ast(1,s)
\end{align}
It follows that 
\begin{align}
&\rho(L,1)+\sum_{s=2}^K \lambda^\ast(1,s)\le\sum_{t=1}^K\rho(L\cap H_t,t)\\
=&\sum_{1\le t\le K:i_t\in L}\frac{1}{|L\cap H_t|}=\sum_{t=1}^{|L|}\frac{1}{t}\le z^\ast
\end{align}
Hence, Constraint~(\ref{b4}) is satisfied.
\medskip

\noindent
{\it Optimality:} 

Finally, the proof is completed by noting that 
\begin{align}
\sum_{s=1}^K\sum_{L\subseteq H_s:i_s\in L} \frac{X^\ast_s(L)}{|L|}=\sum_{s=1}^K\frac{1}{|H_s|}=z^\ast
\end{align}
\end{proof}

\begin{table*}[htb!]
{\scriptsize
\begin{equation*}
\begin{array}{rl}
\hline\hline\\
(\textsc{P1}) \  & \max \ X_1(\{1\})+\frac{X_1(\{1,2\})}{2}+\frac{X_1(\{1,3\})}{2}+\frac{X_1(\{1,2,3\})}{3}+X_2(\{2\})+\frac{X_2(\{2,3\})}{2}+X_3(\{3\})\\
\text{subject to} &\\
{\lambda(1,2)}:&  X_2(\{2\})+X_2(\{3\})+X_2(\{2,3\}) \le  X_1(\{2\})+X_1(\{3\})+X_1(\{1,2\})+X_1(\{1,3\})+ X_1(\{2,3\})+X_1(\{1,2,3\})\\
{\lambda(1,3)}:&X_3(\{3\})\le X_1(\{3\})+X_1(\{1,3\}) + X_1(\{2,3\})+X_1(\{1,2,3\})  \\
{\lambda(2,3)}:&X_3(\{3\})\le X_2(\{3\})+ X_2(\{2,3\})  \\
{z}:& X_1(\{1\})+X_1(\{2\})+X_1(\{3\})+X_1(\{1,2\})+X_1(\{1,3\})+ X_1(\{2,3\})+X_1(\{1,2,3\})\le 1\\
&X_1(\{1\}),X_1(\{2\}),X_1(\{3\}),X_1(\{1,2\}),X_1(\{1,3\}),X_1(\{2,3\})\ge 0,\\
&X_1(\{1,2,3\}),X_2(\{2\}),X_2(\{3\}),X_2(\{2,3\}),X_3(\{3\})\ge  0 \\ 
\\\hline\\
(\textsc{D1}) \ &  \min \ z\\
\text{subject to} &\\
{X_2(\{2\})}: & \lambda(1,2) \ge  1 \\
{X_2(\{3\})}: & \lambda(1,2)-\lambda(2,3) \ge  0 \\
{X_2(\{2,3\})}:  & \lambda(1,2)-\lambda(2,3) \ge \frac{1}{2} \\
{X_3(\{3\})}: &\lambda(1,3)+\lambda(2,3) \ge  1\\
{X_1(\{1\})}: &1\le z\\
{X_1(\{2\})}: & \lambda(1,2)\le z\\
{X_1(\{3\})}: &\lambda(1,2) + \lambda(1,3)\le z\\
{X_1(\{1,2\})}:&\frac{1}{2}+  \lambda(1,2)\le z\\
{X_1(\{1,3\})}: &\frac{1}{2}+  \lambda(1,2)+\lambda(\{3\},1,3)\le z\\
{X_1(\{2,3\})}: & \lambda(1,2)+ \lambda(1,3)\le z\\
{X_1(\{1,2,3\})}: &\frac{1}{3}+ \lambda(1,2)+ \lambda(1,3)\le z\\
& \lambda(1,2),\lambda(1,3),\lambda(2,3)\ge 0 \\
\\\hline\\
(\textsc{D2}) \  & \min \ \lambda(1,2)+\lambda(1,3)+\lambda(2,3)\\
\text{subject to} &\\
{X_2(\{2\})}: & \lambda(1,2)=  1 \\
{X_2(\{3\})}: & \lambda(1,2)-\lambda(2,3)=  0 \\
{X_2(\{2,3\})}:  & \lambda(1,2)-\lambda(2,3) = \frac{1}{2} \\
{X_3(\{3\})}: &\lambda(1,3)+\lambda(2,3) =  1\\
& \lambda(1,2),\lambda(1,3),\lambda(2,3)\ge 0\\
\\\hline\\
(\textsc{P2}) \  &  \max \ X_2(\{2\})+\frac{X_2(\{2,3\})}{2}+X_3(\{3\})\\
\text{subject to} &\\
{\lambda(1,2)}:&  X_2(\{2\})+X_2(\{3\})+X_2(\{2,3\}) \le  1\\
{\lambda(1,3)}:&X_3(\{3\})\le 1 \\
{\lambda(2,3)}:&X_3(\{3\})\le X_2(\{3\})+ X_2(\{2,3\})+1 \\ \\
\hline\hline
\end{array}
\end{equation*}
}
\caption{(\textsc{P1}), (\textsc{D1}), (\textsc{D2}), (\textsc{P2}) for $K=3$.} \label{tab:usage-k3}
\end{table*}

\begin{customthm}{7} \label{thm:poa_ub2}
Consider usage based cost-sharing for pass sharing problem with uniform average cost $c^f = 1$. Then 
\begin{equation}
{\sf SPoA}^{\rm ub}_K \le \cH_K+1=\Theta(\log K).
\end{equation} 
\end{customthm}
\begin{proof}
Recall that ${\sf T}_i$ is the a set of required usage timeslots of user $i$, ${\sf T}_r$ is the set allowable timeslots of pass $r$, and ${\mathscr F}_i(r(G))= {\sf T}_i \cup \big({\sf T}_r\backslash(\bigcup_{j \in G} {\sf T}_j)\big)$. For pass sharing problem, we note that $X_G(L) = 0$ if $1 < |L| < |G|$. Hence,
\begin{equation}
p^{\rm ub}_{i}(G) = \frac{X_G(G)}{|G|} + X_G(\{i\})
\end{equation}
Since the average cost $c^f = 1$, $X_G(\{i\}) = |{\sf T}_i|$, and $X_G(G) = |{\sf T}_r\backslash(\bigcup_{j \in G} {\sf T}_j)|$.

Let $H_s \triangleq \{i_s, ..., i_K \}$ and $\hat{X} \triangleq \max_{s \in \{1,...,K\}} X_s(H_s)$.  Applying Lemma~\ref{lem:common} with $p_i=p_i^{\rm ub}$, we obtain 
\begin{equation}
{\sf SPoA}^{\rm ub}_K \le\max_{c(\cdot),~H_1\supset\cdots\supset H_{K}} \frac{1}{c(H_1)} \Big( \sum_{s=1}^{K} \big( \frac{\hat{X}}{|H_s|} + |{\sf T}_{i_s}| \big) \Big)
\end{equation}

Note that $\hat{X} \le c(H_1)$ because of monotonicity of cost function, and $\sum_{s=1}^{K} |{\sf T}_{i_s}| \le c(H_1)$, because the coalition of users $H_1= \{i_1, ..., i_K \}$ cannot overlap in their required usage timeslots, which are within the allowable timeslots of a pass utilized in $c(H_1)$. Therefore,
\begin{equation}
{\sf SPoA}^{\rm ub}_K \le 1 + \max_{c(\cdot)} \sum_{s= 1}^{K} \frac{1}{s} = \cH_K+1
\end{equation}
\end{proof}

\subsection{Existence of Stable Coalition Structures}  

This section investigates the existence of stable coalition structures considering different cost-sharing mechanisms.
First, we define a {\em cyclic preference} as sequences $(i_1,..., i_s)$ and $(G_1,..., G_s)$, where $i_k \in G_k\cap G_{k+1}$ for all $k \le s-1$, and $i_s \in G_s\cap G_1$, such that
\begin{eqnarray*}
u_{i_1}(p_{i_1}(G_1)) & > & u_{i_1}(p_{i_1}(G_2)), \\
u_{i_2}(p_{i_2}(G_2)) & > & u_{i_2}(p_{i_2}(G_3)), \\
& \vdots & \notag \\
u_{i_s}(p_{i_s}(G_s)) & > & u_{i_s}(p_{i_s}(G_1))
\end{eqnarray*}

\begin{customlemma}{11} \label{lem:cyc}
If there exists no cyclic preference, there always exists a stable coalition structure. Furthermore, such a stable coalition structure can be found in time $n^{O(K)}$.
\end{customlemma}

\begin{proof}
We include the standard argument for completeness (see, e.g., \cite{HVW15} for dynamic coalition formation by local improvements). Consider a directed graph $\cG=(\cN_K,E)$ on the set $\cN_K\triangleq\{S\in 2^\cN:|S|\le K\}$ of subsets of size at most $K$. For two sets $G_1,G_2\in\cN_K$, we define an edge $(G_1,G_2)\in E$ if and only if there is a participant $i\in G_1\cap G_2$ such that $u_i(G_1)<u_i(G_2)$. Then the existence of a cyclic preference is equivalent to the existence of a directed cycle in $\cG$. Thus if there exists no cyclic preference, then $\cG$ is acyclic and hence has at least one sink. 

Let $\cP$ be a maximal subset of sinks in $\cG$ with the property that any two distinct nodes $G,G'\in\cP$ are pairwise disjoint. 
Let $S$ be the set of participants covered by $\cP$, and $\cG'$ be the subgraph of $\cG$ obtained by deleting all the nodes containing some participant in $S$. 

By induction, there is a stable coalition structure $\cP'$ among the set of participants $S'\triangleq\cN\backslash S$. It follows that $\cP\cup\cP'$ is a stable coalition structure on the set of all participants. Indeed, if there is a blocking coalition $G_1$ then $G_1\cap S\neq\varnothing$ (since otherwise $\cP'$ is not stable among the participants in $S'$). But then there must exist $i\in S$ such that contain $u_i(G_1)>u_i(G_2)$, where $G_2\in\cP$ is the coalition containing $i$. This would imply that $(G_2,G_1)\in E$ contradicting that $G_2$ is a sink in $\cG$.  
\end{proof}

\begin{customthm}{8} 
For equal-split cost-sharing, there always exists a stable coalition structure.
\end{customthm}
\begin{proof}
If there exists a cyclic preference, defined by $(i_1,..., i_s)$ and $(G_1,..., G_s)$, then
\begin{eqnarray*}
u_{i_1}(p^{\rm eq}_{i_1}(G_1)) = c_{i_1} - \frac{c(G_1)}{|G_1|} & > & c_{i_1} - \frac{c(G_2)}{|G_2|} = u_{i_1}(p^{\rm eq}_{i_1}(G_2)), \\
u_{i_2}(p^{\rm eq}_{i_2}(G_2)) = c_{i_2} - \frac{c(G_2)}{|G_2|} & > & c_{i_2} - \frac{c(G_3)}{|G_3|} = u_{i_2}(p^{\rm eq}_{i_2}(G_3)), \\
& \vdots & \notag \\
u_{i_s}(p^{\rm eq}_{i_s}(G_s)) = c_{i_s} - \frac{c(G_s)}{|G_s|} & > & c_{i_s} - \frac{c(G_1)}{|G_1|} = u_{i_s}(p^{\rm eq}_{i_s}(G_1))
\end{eqnarray*}
Summing the above equations, one obtains a contradiction $ 0 > 0$. This completes the proof by Lemma~\ref{lem:cyc}.
\end{proof}

\begin{customthm}{9}
For proportional-split cost-sharing, there always exists a stable coalition structure.
\end{customthm}
\begin{proof}
If $u_{i}(p^{\rm pp}_{i}(G_1)) > u_{i}(p^{\rm pp}_{i}(G_2))$, then
\begin{equation}
c_i - \frac{c_i \cdot c(G_1)}{\sum_{j \in G_1} c_j} > c_i - \frac{c_i \cdot c(G_2)}{\sum_{j \in G_2} c_j} \quad \Rightarrow \quad 
\frac{ c(G_1)}{\sum_{j \in G_1} c_j} < \frac{ c(G_2)}{\sum_{j \in G_2} c_j}
\end{equation}
Thus, if there exists a cyclic preference, then
\begin{equation}
\frac{ c(G_1)}{\sum_{j \in G_1} c_j} < \frac{ c(G_2)}{\sum_{j \in G_2} c_j}<\cdots < \frac{ c(G_s)}{\sum_{j \in G_s} c_j}<\frac{ c(G_1)}{\sum_{j \in G_1} c_j}
\end{equation}
Summing the above equations, one obtains a contradiction $ 0 > 0$. This completes the proof by Lemma~\ref{lem:cyc}.
\end{proof}

\begin{customthm}{10} \label{thm:eqaexist}
For egalitarian cost-sharing, there always exists a stable coalition structure.
\end{customthm}
\begin{proof}
If there exists a cyclic preference, then
\begin{align}
& u_{i_1}(p^{\rm ega}_{i_1}(G_1)) = \frac{(\sum_{j \in G_1} c_j) - c(G_1)}{|G_1|} \notag  \\
> &  \frac{(\sum_{j \in G_2} c_j) - c(G_2)}{|G_2|} = u_{i_1}(p^{\rm ega}_{i_1}(G_2)), \\
& u_{i_2}(p^{\rm ega}_{i_2}(G_2)) = \frac{(\sum_{j \in G_2} c_j) - c(G_2)}{|G_2|}  \notag  \\
> & \frac{(\sum_{j \in G_3} c_j) - c(G_3)}{|G_3|} = u_{i_2}(p^{\rm ega}_{i_2}(G_3)), \\
& \qquad \vdots  \notag \\
& u_{i_s}(p^{\rm ega}_{i_s}(G_s)) = \frac{(\sum_{j \in G_s} c_j) - c(G_s)}{|G_s|} \notag \\
> & \frac{(\sum_{j \in G_1} c_j) - c(G_1)}{|G_1|} = u_{i_s}(p^{\rm ega}_{i_s}(G_1))
\end{align}
Summing the above equations, one obtains a contradiction $ 0 > 0$. This completes the proof by Lemma~\ref{lem:cyc}.
\end{proof}

\begin{customthm}{11}
For Nash bargaining solution, there always exists a stable coalition structure, irrespective of the constraint of non-negative payments.
\end{customthm}
\begin{proof}
First, if the constraint of non-negative payments is not considered, then the existence of a stable coalition structure follows from Corollary~\ref{cor:nashega} and Theorem~\ref{thm:eqaexist}.

Second, if the constraint of non-negative payments is considered, then $u_i(p^{\rm nash}_{i}(G) )\le c_i$. Note that there exists at least one participant $i \in G$ for any $G \subseteq {\cal N}$, such that $p^{\rm nash}_{i}(G) > 0$. Otherwise, $\sum_{i \in G} p^{\rm nash}_{i}(G) = c(G)= 0$ (which we may exclude without loss of generality). 
Suppose that there exists a cyclic preference, defined by $(i_1,..., i_s)$ and $(G_1,..., G_s)$. Let $H_t \subseteq G_t$ be the set of participants with positive payment in each $G_t$, that is, $p^{\rm nash}_{i}(G_t) > 0$ for all $i \in H_t$. By Lemma~\ref{lem:nash2}, it follows that
\begin{equation}
u_i(p^{\rm nash}_{i}(G_t)) = 
\left\{
\begin{array}{ll}
\frac{(\sum_{j \in H_t} c_j) - c(G_t)}{|H_t|}, & \mbox{\ if\ } i \in H_t \\
c_i, & \mbox{\ if\ } i \not\in H_t 
\end{array}
\right.
\end{equation}
By Lemma~\ref{lem:nash2}, if $i \not\in H_t$, $u_i(p^{\rm nash}_{i}(G_t)) = c_i \le \frac{(\sum_{j \in H_t} c_j) - c(G_t)}{|H_t|}$.
Hence, $u_i(p^{\rm nash}_{i}(G_t)) \le \frac{(\sum_{j \in H_t} c_j) - c(G_t)}{|H_t|}$ for all $i \in G_t$.

Note that if $u_i(p^{\rm nash}_{i}(G_t)) > u_i(p^{\rm nash}_{i}(G_{r'}))$, then $u_i(p^{\rm nash}_{i}(G_{r'})) \ne c_i$ because $u_i(p^{\rm nash}_{i}(G) )\le c_i$. If there exists a cyclic preference, then
\begin{align}
 & \frac{(\sum_{j \in H_1} c_j) - c(G_1)}{|H_1|} \ge u_{i_1}(p^{\rm nash}_{i_1}(G_1))   \notag \\
  > & u_{i_1}(p^{\rm nash}_{i_1}(G_2)) = \frac{(\sum_{j \in H_2} c_j) - c(G_2)}{|H_2|}, \\
& \frac{(\sum_{j \in H_2} c_j) - c(G_2)}{|H_2|} \ge u_{i_2}(p^{\rm nash}_{i_2}(G_2))   \notag\\
 > & u_{i_2}(p^{\rm nash}_{i_2}(G_3)) = \frac{(\sum_{j \in H_3} c_j) - c(G_3)}{|H_3|} , \\
& & \qquad  \vdots  \notag \\
& \frac{(\sum_{j \in H_s} c_j) - c(G_s)}{|H_s|} \ge u_{i_s}(p^{\rm nash}_{i_s}(G_s))  \notag \\
 > & u_{i_s}(p^{\rm nash}_{i_s}(G_1)) = \frac{(\sum_{j \in H_1} c_j) - c(G_1)}{|H_1|} 
\end{align}
Summing the above equations, one obtains a contradiction $ 0 > 0$. This completes the proof by Lemma~\ref{lem:cyc}.
\end{proof}

\subsection{Usage Based Cost-Sharing}

In general, usage based cost-sharing can induce cyclic preference, and hence, possibly the absence of a stable coalition structure. However, we can show the existence of a stable coalition structure in some special cases.

\medskip

\subsubsection{Pass Sharing}

For any $K\ge 2$, we can show that there always exists a stable coalition structure in the pass sharing problem, by ruling out any cyclic preference. Without loss of generality, we assume the average cost rate is 1 (i.e., $c^f = 1$). If participants $i\in G$ share a pass $r$, then $i$'s payment is given by
\begin{equation}
p_{i}^{\rm ub}(G) = |{\sf T}_i| + \frac{1}{|G|} |{\sf T}_r \backslash (\cup_{j\in G}{\sf T}_j)|
\end{equation}
If $i$ prefers to share in coalition $G$ with pass $r$ rather than on $G'$ with pass $r'$, then $p_{i}^{\rm ub}(G) < p_{i}^{\rm ub}(G')$, namely, 
\begin{equation}
 \frac{1}{|G|} |{\sf T}_r \backslash (\cup_{j\in G}{\sf T}_j)|< \frac{1}{|G'|} |{\sf T}_{r'} \backslash (\cup_{j\in G'}{\sf T}_j)|.
\end{equation}
If there exists a cyclic preference defined by $(i_1,..., i_s)$, $(G_1,...,G_s)$ and $(r_1,..., r_s)$, then
\begin{align}
& \frac{1}{|G_1|} |{\sf T}_{r_1} \backslash (\cup_{j\in G_1}{\sf T}_j)| < \frac{1}{|G_2|} |{\sf T}_{r_2} \backslash (\cup_{j\in G_2}{\sf T}_j)| \notag \\
< & \cdots \notag \\
< & \frac{1}{|G_s|} |{\sf T}_{r_s} \backslash (\cup_{j\in G_s}{\sf T}_j)|<\frac{1}{|G_1|} |{\sf T}_r \backslash (\cup_{j\in G_1}{\sf T}_j)|.
\end{align}
This generates a contradiction. Hence, there always exists a stable coalition structure.

\medskip

\subsubsection{Hotel Room Sharing}

When $K=2$, we can show that there always exists a stable coalition structure, by ruling out any cyclic preference. Let $\tau_i = t^{\sf out}_i - t^{\sf in}_i$ be the interval length required by participant $i$, and $\tau_{i,j}$ be the length of the overlapped interval, if participants $i, j$ share a room. Without loss of generality, we assume the room rate is 1. Then, $i$'s payment is given by
\begin{equation}
p_{i}^{\rm ub}(\{i,j\}) = (\tau_i - \tau_{i,j}) + \frac{1}{2} \tau_{i,j} = \tau_i - \frac{1}{2} \tau_{i,j}
\end{equation}
If $i$ prefers to share with $j$ rather than $k$, then $p_{i}^{\rm ub}(\{i,j\}) < p_{i}^{\rm ub}(\{i,k\})$, namely, $\tau_{i,j} > \tau_{i,k}$.
If there exists a cyclic preference $(i_1,..., i_s)$, then
\begin{equation}
\tau_{i_1,i_s} > \tau_{i_1,i_2} > ... > \tau_{i_{s-1},i_s} > \tau_{i_1,i_s}
\end{equation}
This generates a contradiction. Hence, there always exists a stable coalition structure.

\medskip

\subsubsection{Taxi-ride Sharing}

There exists an instance with no stable coalition structure even for $K=2$, as illustrated in Fig.~\ref{fig:taxi-loop}. Participant $i_{k}$ can share a ride with participant $i_{k-1}$ or participant $i_{k+1}$ (whereas participant $i_s$ can share with $i_{s-1}$ or participant $i_1$). Let the cost from $v^{\sf s}_{i_s}$ to $v^{\sf d}_{i_{k-1}}$ be $c(v^{\sf s}_{i_s}, v^{\sf d}_{i_{k-1}})$. Assume that $c(v^{\sf s}_{i_s}, v^{\sf d}_{i_{k-1}})$ is identical for all $k$, so are $c(v^{\sf s}_{i_s}, v^{\sf s}_{i_{k+1}})$, $c(v^{\sf d}_{i_s}, v^{\sf d}_{i_{k+1}})$ and $c(v^{\sf s}_{i_{k}}, v^{\sf d}_{i_{k}})$ for all $k$. Also, we assume that 
\begin{align}
& c(v^{\sf s}_{i_{k}}, v^{\sf d}_{i_{k}}) > \frac{1}{2} c(v_{i_s}^{\sf s},v_{i_{k-1}}^d) + c(v^{\sf d}_{i_{k-1}}, v^{\sf d}_{i_{k}}) \notag \\
> & c(v^{\sf s}_{i_{k}}, v^{\sf s}_{i_{k+1}}) + \frac{1}{2} c(v^{\sf s}_{i_{k+1}}, v^{\sf d}_{i_{k}}) \label{eqn:taxi-ex}
\end{align}
Hence, participant $i_{k}$ prefers to share with participant $i_{k+1}$, rather than with participant $i_{k-1}$. This generates a cyclic preference $(i_1,..., i_s)$. If there are odd number of participants arranged in a loop, then this can give no stable coalition structure. We remark that Eqn.~(\ref{eqn:taxi-ex}) can be attained, when $s$ is sufficiently large.

\begin{figure}[!htb]
\center
\includegraphics[width=0.3	\textwidth]{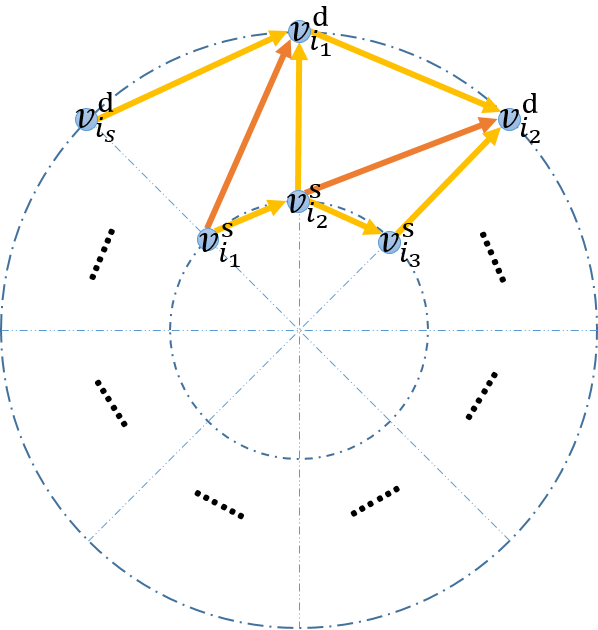}
\caption{An illustration for taxi-ride sharing problem with no stable coalition structure considering usage based cost-sharing.}
\label{fig:taxi-loop}
\end{figure}

\subsection{NP-Hardness} \label{sec:NP}

This section studies the hardness of solving \textsc{$K$-MinCoalition}. 

\begin{customthm}{12} \label{thm:nph}
\textsc{$K$-MinCoalition} is NP-hard for $K\ge 3$.
\end{customthm}
\begin{proof}
First, define the set of coalitions with at most size $K$ by $\cN_K\triangleq\{S\in 2^\cN:|S|\le K\}$.

\textsc{$K$-MinCoalition} can be reduced from NP-hard problem (\textsc{Exact-Cover-By-$K$-Sets}), defined as follows. Given a collection ${\cal G}$ of subsets of $\cN\triangleq\{1,...,n\}$, each of size $K$, find a pairwise-disjoint sub-collection ${\cal G}'$ that covers $\cN$, that is, $S\cap S'=\varnothing$ for all distinct $S,S'\in{\cal G}'$ and $\bigcup_{S\in{\cal G}'}=\cN$. 

Given an instance ${\cal G}$ of \textsc{Exact-Cover-By-$K$-Sets}, we construct an instance of \textsc{$K$-MinCoalition} as follows. For $S\in\cN_K$, we define $c(S)$ as follows:
\begin{equation}
c(S)=\left\{
\begin{array}{ll}
1&\text{ if }S\in{\cal G},\\
1 &\text{ if }S\in \cN_K\backslash {\cal G},~|S|\le K-1,\\
2 &\text{ if }S\in\cN_K\backslash {\cal G},~|S|= K
\end{array}
\right.
\end{equation}
One can check that (C1) and (C2) are satisfied. 
Now consider a feasible solution to \textsc{$K$-MinCoalition} and assume it consists of $n_i$ sets of size $i$ not from ${\cal G}$, for $i=1,...,K$, and $n_K'$ sets of size $K$ from ${\cal G}$. Then the total cost of the solution is $\sum_{i=1}^{K-1}n_i+n_K'+2n_K$. Subject to $\sum_{i=1}^{K-1}i\cdot n_i+K\cdot n_K'+K\cdot n_K=n$, this cost is uniquely minimized when $n_K'=n/K$, that is, when there is a disjoint collection from ${\cal G}$ covering $\cN$. Indeed, the minimum of linear relaxation of this integer programming problem is determined by the minimum ratio test: 
\begin{equation}
\min\left\{\min_{i\in\{1,...,K-1\}}\left\{\frac{1}{i}\right\},\frac{2}{K},\frac{1}{K}\right\}=\frac{1}{K}
\end{equation}
On the other hand, if the answer to the instance ${\cal G}$ of \textsc{Exact-Cover-By-$K$-Sets} is NO, then this unique minimum cannot be achieved by an integral solution, yielding a solution of cost strictly larger than $\frac{n}{K}$.  
\end{proof}

\end{document}